\documentclass[11pt,a4paper]{article}
\usepackage[margin=1in]{geometry}
\usepackage{amssymb,amsthm,mathtools}
\usepackage{enumitem}
\usepackage{framed}
\usepackage{booktabs}
\usepackage{microtype}
\usepackage{algorithm}
\usepackage{algpseudocode}
\usepackage{xcolor}
\usepackage{tcolorbox}
\usepackage{tikz}
\usetikzlibrary{arrows.meta,positioning,shapes.geometric}
\usepackage[colorlinks=true,citecolor=blue,linkcolor=blue,urlcolor=blue]{hyperref}
\usepackage{pgfplots}
\pgfplotsset{compat=1.17}
\usetikzlibrary{positioning, arrows.meta, shapes.geometric, calc, backgrounds, fit, decorations.pathreplacing, patterns}
\usepackage{url}
\theoremstyle{plain}
\newtheorem{theorem}{Theorem}[section]
\newtheorem{lemma}[theorem]{Lemma}
\newtheorem{proposition}[theorem]{Proposition}

\theoremstyle{definition}
\newtheorem{definition}[theorem]{Definition}

\theoremstyle{remark}

\DeclareMathOperator{\Gal}{Gal}
\DeclareMathOperator{\Tr}{Tr}
\DeclareMathOperator{\Nm}{N}

\DeclareMathOperator{\rk}{rk}

\newcommand{\Z}{\mathbb{Z}}
\newcommand{\Q}{\mathbb{Q}}
\newcommand{\R}{\mathbb{R}}
\newcommand{\C}{\mathbb{C}}

\newcommand{\nm}[1]{\lVert#1\rVert}

\newcommand{\tO}{\tilde{O}}

\newcommand{\ri}{\mathrm{i}}

\tcbuselibrary{skins,breakable}
\newtcolorbox[use counter=algorithm]{breakablealgorithm}[2]{%
  breakable,
  enhanced,
  label            ={#2},
  colback          =gray!3!white,
  colframe         =black!65,
  fonttitle        =\bfseries\small,
  title            ={Algorithm \thetcbcounter\enspace #1},
  title after break={Algorithm \thetcbcounter\enspace
                       \normalfont\small #1
                       \hfill\itshape(continued)},
  before skip      =\medskipamount,
  after skip       =\medskipamount,
  left             =4pt,
  right            =4pt,
  top              =4pt,
  bottom           =4pt,
}

\title{Module Lattice Security (Part IV):\\
{\large Probabilistic Polynomial Quantum Attack
on Module-LWE over 2-Power Cyclotomics}}
\author{Ming-Xing Luo
\\
\footnotesize
School of Information Science and Technology, Southwest Jiaotong University, Chengdu 610031, China}

\begin{document}
\maketitle

\begin{abstract}
We present a quantum attack on ML-KEM and related 2-power cyclotomic lattice schemes. Combining with Parts I-III, we provide an algorithm and verify the resulting approximation factor satisfies $\gamma\le 21 < q/2=1664.5$ for ML-KEM-1024, with a success probability $\ge 0.99$. We apply a tower decomposition of the Principal Ideal Problem (PIP) through the chain $\Q\subset \Q(\zeta_8)\subset\cdots\subset \Q(\zeta_{2^k})$ which yields a polynomial-time quantum algorithm costing $O(n^3 \log^2 n)$ gates, $O(n^2 \log n)$ qubits, and poly$(n)$ classical bit operations. We extend the analysis to Falcon, Hawk, and NTRU over 2-power cyclotomic rings with polynomial-time quantum algorithms. 
\end{abstract}

\medskip\noindent\textbf{Keywords:}
Quantum attack, cyclotomic tower, probabilistic polynomial quantum algorithm, log-unit lattice, Babai's algorithm, post-quantum cryptography.

\section{Introduction}\label{Sintro}

The global transition to post-quantum cryptography is a great challenge for modern information security. Once large-scale fault-tolerant quantum computers are realized, Shor's algorithm \cite{Shor97} can break all classical public-key schemes based on integer factorization \cite{RSA78} and discrete logarithms \cite{DH76}. In response, NIST completed a standardization process \cite{NIST16} and selected ML-KEM (originally CRYSTALS-Kyber \cite{SAB+22}) as the primary federal standard for post-quantum key encapsulation \cite{NISTFIPS203}.

ML-KEM's security reduces to the hardness of Module Learning With Errors (Module-LWE) \cite{LS15}, which in turn benefits from worst-case to average-case reductions from the Module Shortest Vector Problem \cite{LS15,AD17}.  The most direct quantum cryptanalytic route is the CDPR algorithm \cite{CDPR16}, which exploits the algebraic structure of cyclotomic rings to find short generators of principal ideals. However, there are two critical open questions that have remained open:
\begin{enumerate}
  \item Is the CDPR approximation factor sufficiently small for full ML-KEM secret-key recovery?
  \item Does the underlying quantum PIP subroutine achieve true polynomial-time complexity, with no hidden exponential overhead and generalize Riemann hypothesis?
\end{enumerate}

This part answers both questions affirmatively. We summaries each part's key contribution as:
\begin{itemize}[nosep]
  \item \textbf{Part I.}  Proves $h_k^+=1$ (trivial class number) for the
        maximal real subfield of $\Q(\zeta_{2^k})$ for every $k\le 12$, giving the ring a PID-like structure that makes every ideal principal.
  \item \textbf{Part II.}  Proves that the module-to-ideal reduction introduces only a constant factor $\alpha_d=O(1)$ independent of module rank $d$.
  \item \textbf{Part III.}  Computes the exact $L^2$ CVP constant     $\pi/(2\sqrt{6})$ for the log-unit lattice, shows the nearest lattice point is always the origin for short generators, and proves the per-component variance $\sigma_{g_0}=O(1)$ regardless of the modulus $q$.
  \item \textbf{Part IV (this paper).} Shows four parts combined can imply a probabilistic quantum algorithm costing $O(n^3\log^2 n)$ gates, and $\gamma\le 21< q/2$ for ML-KEM-1024, with 99\% $\gamma \approx 103$ below threshold, and extend to analyze Falcon, Hawk, and NTRU over 2-power cyclotomics.
\end{itemize}

The rest is organized as follows. Section \ref{Sprelim} shows preliminaries. Section \ref{ECDPRA} contributes the full pipeline including all main algorithms. Section \ref{Sapprox} answers whether the resulting approximation factor $\gamma$ is small enough to break ML-KEM. Section \ref{Spip} shows  polynomial tower PIP algorithm which gives the first polynomial-time construction. Section \ref{Sfactor} extends the analysis to other lattice schemes including Falcon, Hawk, and NTRU over 2-power cyclotomics. Section \ref{Scomaplit} compares the present result with known results while the last section concludes the paper. 

\section{Preliminaries}\label{Sprelim}

This section introduces some definitions, notation, and background results that are used in the rest of the paper. 

\subsection{Cyclotomic rings}

Let $m=2^k$, $n=2^{k-1}=m/2$, and $\zeta=e^{2\pi\ri/m}$. The $2^k$-th cyclotomic ring is defined by $R=\Z[\zeta]$, the ring of integers of the cyclotomic field $K=\Q(\zeta)$ \cite{Wash97,Neukirch99}.

The Galois group $\Gal(K/\Q) \cong (\Z/m\Z)^\times$ consists of all automorphisms $\sigma_a: \zeta \mapsto \zeta^a$ for odd integers $a\in \{1, 3, \dots, m-1\}$ \cite{Wash97}. The canonical embeddings are defined by $\sigma_j: K \hookrightarrow \C$ with $\sigma_j(g)=\sum_ic_i \zeta^{ij}$, for odd $j\in \{1, 3, \dots, m-1\}$. These evaluate the polynomial $g$ at the $n$ distinct primitive $m$-th roots of unity. There are $n/2$ independent embeddings.

The field norm of $g\in K$ is defined by $\Nm_{K/\Q}(g)=\prod_{j \text{ odd}} \sigma_j(g)\in \Z$, and the trace is $\Tr_{K/\Q}(g)=\sum_{j \text{ odd}} \sigma_j(g)\in \Z$. Both are integers for any $g\in R$ \cite{Neukirch99,Lang94}. For a unit $\varepsilon\in R^\times$, we have $|\Nm(\varepsilon)|=1$.

\subsection{Module lattices}

A lattice $\Lambda \subset \R^m$ is a discrete additive subgroup, equivalently the set of all integer linear combinations of a basis $\{\mathbf{b}_1, \dots, \mathbf{b}_r\} \subset \R^m$ \cite{Micciancio04}.

The minimum distance is defined by $\lambda_1(\Lambda)=\min_{\mathbf{v}\in \Lambda\setminus \{0\}} \|\mathbf{v}\|_2$, i.e., the length of the shortest nonzero vector. Finding $\lambda_1$ is the Shortest Vector Problem (SVP), which is the central hard problem underlying lattice-based cryptography.

A module lattice arises from a free $R$-module $M \cong R^d$ embedded into Euclidean space via the canonical embeddings \cite{LPR10,LS15}. A rank-$d$ module lattice over $R$ is determined by a matrix $B\in R^{d \times d}$ as 
\begin{align}
\Lambda_B=\{B\mathbf{z}: \mathbf{z}\in R^d\} \subset K^d.
\label{Elangm}
\end{align}
Applying the embeddings $\sigma_j$ entry-wise to every element of $K^d$ changes this into a classical Euclidean lattice of $\Z$-rank $nd$, in $\R^{nd}$.

The secret key of ML-KEM \cite{NISTFIPS203} is a short vector $(\mathbf{s}_1, \mathbf{s}_2)\in R^k \times R^\ell$ lying in a coset of the module lattice 
\begin{align}
\Lambda_{A}=\{(\mathbf{z}_1, \mathbf{z}_2)\in R^\ell \times R^k: A\mathbf{z}_1+\mathbf{z}_2 \equiv \mathbf{0} \pmod q\}.
\label{Elmabda}
\end{align}
Recovering $(\mathbf{s}_1, \mathbf{s}_2)$ from the public key $(A, \mathbf{t})$ with $\mathbf{t}=A\mathbf{s}_1+\mathbf{s}_2$ is SVP on this module lattice. The hardness of this problem is shown by Langlois and Stehl\'e \cite{LS15}.

\subsection{Hardness assumptions and attack target}

This subsection explains the two security assumptions underlying the NIST standards.

\textbf{Module-LWE (Hardness assumption).}  The module learning with errors problem \cite{Regev05,LPR10,LS15} asserts that, given $\mathbf{A}\in R_q^{k \times \ell}$ uniformly at random and $\mathbf{t}=A\mathbf{s}_1+\mathbf{s}_2+\mathbf{e} \pmod q$ for short secret and error vectors, it is computationally infeasible to recover $(\mathbf{s}_1, \mathbf{s}_2, \mathbf{e})$ even with a quantum computer. This assumption is the foundation of ML-KEM's security proof.

\textbf{Module-SVP (Target).}  This paper attacks the approximate Module-SVP: find a vector $\mathbf{v}$ in the module lattice $\Lambda$ with $\|\mathbf{v}\|\le \gamma \dot \lambda_1(\Lambda)$, for the smallest achievable approximation factor $\gamma$. If $\gamma< q/2$ (for ML-KEM) or $\gamma < q/\beta$ (for ML-DSA), the vector $\mathbf{v}$ recovers the secret key.

Generic lattice reduction algorithms including LLL \cite{LLL82}, BKZ \cite{Schnorr87,GN08}, and lattice sieving \cite{BDGL16,ChaillouxLoyer21} do not exploit the cyclotomic ring structure. Their best approximation factors for ML-KEM-1024 are about $2^{220}$, which is larger than the security threshold. The CDPR attack \cite{CDPR16} achieves an exponentially gate count because it solves a completely different easier problem: the Principal Ideal Problem (PIP), which is polynomial-time on a quantum computer when the ring has trivial class number.

\subsection{The log-embedding}

\begin{definition}\label{def:logembp}
For a non-zero element $g\in K$, the log-embedding is defined by 
\begin{align}
L(g)=\bigl(\log|\sigma_j(g)|\bigr)_{\text{odd } j\in [1,m)}\in \R^n.
\label{eqlogn}
\end{align}
The trace-zero projection can remove the overall scale as 
\begin{align}
\Pi_{H_0}(L(g))=L(g)-\frac{1}{n}\log|\Nm_{K/\Q}(g)|\cdot{}\mathbf{1}\in H_0,
\label{eqproj}  
\end{align}
where $H_0=\{x\in\R^n: \sum_j x_j=0\}$ is the trace-zero hyperplane. 
\end{definition}

The log-unit lattice is defined by $\Lambda=\{L(\epsilon): \epsilon\in R^\times\} \subset H_0$. From the Dirichlet's Unit Theorem \cite{Neukirch99} (with $r_1=0$, $r_2=n/2$ for complex field $K$), $\Lambda$ has rank $r=n/2-1$.

The log-unit lattice encodes the unit group $R^\times$ as a Euclidean lattice, converting the multiplicative problem of finding a short unit into the additive problem of CVP. 

\begin{definition}\label{def:logemb}
For a nonzero $g\in K$, the log-embedding is given by 
\begin{align}
  L(g)=(\log|\sigma_1(g)|, \log|\sigma_3(g)|, \dots, \log|\sigma_{n-1}(g)|)\in \R^{n/2}.
  \label{eqLg}
\end{align}
As each conjugate embeddings has equal modulus, we denote one representative from each pair and then obtain a vector in $\R^{n/2}$.
\end{definition}

The trace-zero hyperplane $H_0=\{x\in \R^{n/2:} \sum_i x_i=0\}$ contains all log-embedded units as $\sum_j\log|\sigma_j(\varepsilon)|=\log|\Nm(\varepsilon)|=0$ for $\varepsilon\in R^\times$.

\begin{definition}\label{def:logunit}
For log-unit lattice $\Lambda=L(R^\times) \subset H_0$, when $h_k^+=1$, a basis for $\Lambda$ can be represented as the log-embeddings of the cyclotomic units $\xi_a=\sin(a\pi/m)/\sin(\pi/m)$ for odd $a\in \{1, 3, \dots, n-1\}$ \cite{Wash97,Sinnott78}.
\end{definition}

Every unit $\varepsilon\in R^\times$ maps to a point $L(\varepsilon)$ in  $\Lambda$. Two generators of the same ideal are different by a unit, i.e., if $(g_0)=(g)$, then $g=g_0 \varepsilon$ for some $\varepsilon\in R^\times$. So, we have $L(g)=L(g_0)+L(\varepsilon)$. 

\subsection{The Short Generator Problem}

\begin{definition}\label{def:sgp}
The Short Generator Problem (SGP) is defined by: Given a principal ideal $I=(g)\subset R$ with a long generator $g$ (potentially of exponentially large norm), find a short generator $g_0$ such that $(g_0)=I$ and $\nm{g_0}_2$ is small.
\end{definition}

Since $g=g_0\varepsilon$, we have $\Pi_{H_0}(L(g))=\Pi_{H_0}(L(g_0))+L(\varepsilon)$, where $\Pi_{H_0}$ denotes orthogonal projection onto $H_0$. The problem of solving SGP is then reduced to CVP on $\Lambda$: The CVP solution $L(\varepsilon)\in\Lambda$ identifies the unit discrepancy, and then $g_0=g\cdot{}\varepsilon^{-1}$.

\subsection{Babai's nearest-plane algorithm}\label{subSprelim-babai}

Babai's nearest-plane algorithm \cite{Babai86,GN08} is the specific CVP algorithm used in Phase 2 of the CDPR attack \cite{CDPR16}. For a general lattice and target, it provides only an approximation to the nearest lattice point. 

Given a lattice basis $B=(\mathbf{b}_1, \dots, \mathbf{b}_r)$ with Gram-Schmidt orthogonalization (GSO) $B^* =(\mathbf{b}_1^*, \dots, \mathbf{b}_r^*)$ \cite{GN08}, and a target $\mathbf{t}$, Babai's nearest-plane algorithm rounds the projection coefficients from $i=r$ to $i=1$ as
\begin{align}
  c_i=\left\lfloor\frac{\langle\mathbf{t}-\sum_{j>i} c_j \mathbf{b}_j,\,
                  \mathbf{b}_i^* \rangle}{\|\mathbf{b}_i^*\|^2}\right\rceil.
  \label{eqBabai}
\end{align}
The output lattice vector is defined by $\mathbf{v}=\sum_i c_i\mathbf{b}_i$ and the residual is $\boldsymbol{\rho}=\mathbf{t}-\mathbf{v}$.

The Coarse Lattice Theorem (Part III) shows that for the specific targets arising from short ring generators, all rounding coefficients $c_i$ equal zero, so Babai's algorithm outputs the zero vector and the residual is the target itself. The Infinite Capture Radius Theorem (Part III) then shows that Babai's algorithm can exactly recover an arbitrary lattice translation of such a target. 

\subsection{Module-LWE and ML-KEM security}

The Module Learning With Errors (MLWE) problem over $R$ with rank $d$, modulus $q$, and error distribution $\chi$ asks to distinguish between samples $(A, A\mathbf{s}+\mathbf{e})$ and $(A, \mathbf{u})$, where $A\in R_q^{d \times d}$, $\mathbf{s, e}\in R_q^d$ are small, and $\mathbf{u}$ is uniform.

ML-KEM's security can be reduced to MLWE, which in turn reduced to Module-SVP with an approximation factor $\gamma$ \cite{LS15,Peikert09}. An attacker who finds a vector of norm at most $\gamma\cdot{}\lambda_1(\Lambda_q)$ in module lattice can distinguish MLWE samples \cite{Regev05,LS15}. For full secret-key recovery, the attacker needs $\gamma < q/2$, i.e., the recovered vector should have coordinates small enough to be uniquely decoded modulo $q$ by bounded-distance decoding \cite{Lindner10,CDPR16}, see Table \ref{Tpara}.

\begin{table}[!h]
    \centering
        \caption{ML-KEM's secure parameters.}
    \begin{tabular}{lcccc}
\toprule
Scheme & $d$ & $n$ & $q$ & Attack threshold $q/2$ \\
\midrule
ML-KEM-512  & 2 & 256 & 3329 & 1664.5 \\
ML-KEM-768  & 3 & 256 & 3329 & 1664.5 \\
ML-KEM-1024 & 4 & 256 & 3329 & 1664.5 \\
\bottomrule
\end{tabular}
    \label{Tpara}
\end{table}

\section{The Extended CDPR Algorithm}\label{ECDPRA}

This section presents the full attack as a single Algorithm.  
\subsection{The four-phase structure}

The extended CDPR attack proceeds in four phases: Module to ideal reduction, Quantum PIP, Log-unit CVP and Short generator. The important stage is to factor $g=g_0 \varepsilon$: the quantum phase finds some generator $g$ of the ideal, but $g$ may be exponentially larger than the shortest generator $g_0$. The classical phases recover the unit $\varepsilon$ and divide it out.

The approximation factor $\gamma$ is the key quantity connecting the lattice attack to the cryptographic security of ML-KEM. Especially, our approximation factor is given by 
\begin{align}
\gamma=\nm{g_0}/\|g_{\rm short}\|=\exp(\|\boldsymbol{\rho}\|_\infty),
\label{eqapprx}
\end{align}
where $\boldsymbol{\rho}$ is the CVP residual. Parts II and III together show that $\|\boldsymbol{\rho}\|_\infty=O(\sqrt{\log n})$, which implies $\gamma=\exp(O(\sqrt{\log n})) \ll n^\epsilon$ for every $\epsilon > 0$. For ML-KEM-1024 with $n=256$, this evaluates to $\gamma_{\rm th} \approx 21$ at the median and $\gamma_{99\%} \approx 103$, both far below the threshold $q/2=1664.5$.

All symbols used in what follows are shown in Table \ref{Tnotation}. Here, note the difference between $\sigma_j$ (an embedding of $K$) and $\sigma_d$ (the intrinsic log-unit imbalance of the module determinant ideal), and between $\boldsymbol{\rho}$ (the CVP residual vector) and $\psi'$ (the trigamma function).

\begin{table}[!h]
\begin{center}
\caption{All symbols used throughout the paper are collected in the table below.}
\label{Tnotation}
\begin{tabular}{cl}
\toprule
Symbol & Meaning \\
\midrule
$k \ge 3$ & Tower level index \\
$m=2^k$ & Conductor \\
$n=2^{k-1}$ & Ring degree $= [K:\Q]$ \\
$\zeta=e^{2\pi\ri/m}$ & Primitive $m$-th root of unity \\
$R=\Z[\zeta]$ & Cyclotomic ring of integers \\
$K=\Q(\zeta)$ & Cyclotomic field \\
$\sigma_j$ & Canonical embedding ($j$ odd, $1\le j\le m-1$) \\
$J=\{1,3,\dots,n-1\}$ & Representatives of conjugate pairs \\
$\Lambda$ & Log-unit lattice (rank $r=n/2-1$) \\
$H_0$ & Trace-zero hyperplane in $\R^{n/2}$ \\
$\Pi_{H_0}$ & Orthogonal projection onto $H_0$ \\
$\psi'$ & Trigamma function \\
$\boldsymbol{\rho}$ & CVP residual vector \\
$\gamma$ & SGP approximation factor $= \exp(\|\boldsymbol{\rho}\|_\infty)$ \\
\bottomrule
\end{tabular}
\end{center}
\end{table}

\subsection{The master algorithm}

This subsection present the main algorithm shown in Algorithm \ref{alg:ecdpr} for the extended CDPR attack. 

\begin{breakablealgorithm}{Extended CDPR Attack on ML-KEM}{alg:ecdpr}
\begin{algorithmic}[1]
\medskip
\Statex \textbf{INPUTS:}
\Statex \quad $B\in R^{d \times d}$: a module basis matrix for a rank-$d$ Module-LWE instance over $R=\Z[\zeta_{2^k}]$ \Statex $q$: the ML-KEM modulus (e.g., $q=3329$ for ML-KEM-1024)
\Statex \quad $k\le 12$: the tower level ($h_k^+=1$ by Part I)

\medskip
\Statex \textbf{OUTPUT: }
\Statex \quad A short generator $g_0\in R$ of the determinant ideal $(\det B) \subset R$, with approximation factor $\gamma=\exp(\|\boldsymbol{\rho}\|_\infty) < q/2$

\medskip
\Statex \textbf{PHASE 1: Module-to-Ideal Reduction (Part II)}
\State Compute the Gram-Schmidt decomposition: $B=Q_R\cdot{}T_R$, where $T_R\in R^{d \times d}$ is upper triangular with principal ideals $I_i=(T_{R,ii}) \subset R$ on the diagonal.
\State Set the determinant ideal 
       $I=I_1\cdot{}I_2\cdots I_d=(\det B) \subset R$.

\medskip
\medskip
\Statex \textbf{PHASE 2: Quantum Tower PIP (Section \ref{Spip})}
\State Recall the cyclotomic tower:
       $\Q \subset K_3 \subset K_4 \subset\cdots \subset K_k=K$.
\State \textbf{for} $L=3, 4, \dots, k$ \textbf{do}
\State \quad 1) Compute the norm ideal $J_L=\Nm_{K_L/K_L^+}(I_L) \subset R_L^+$, $I_L=I \cap R_L$.
\State \quad 2) Set up an abelian HSP for the relative extension $K_L^+/K_{L-1}^+$:
\State \quad\quad Group: $G_L=(\Z/N_L\Z)^{\Delta r_L}$, $N_L=2^{O(L\cdot{}2^L)}$, $\Delta r_L=2^{L-3}$ new unit generators at level $L$.
\State \quad\quad Oracle: $f_L(e_1,\dots,e_{\Delta r_L})= \prod_a \xi_a^{e_a} \bmod I_L^+$ (ring multiplication via NTT).
\State \quad\quad Run the quantum HSP algorithm:
\State \quad\quad\quad (a) Prepare uniform superposition over $G_L$.
\State \quad\quad\quad (b) Query the oracle $f_L$.
\State \quad\quad\quad (c) Apply the QFT on $G_L$.
\State \quad\quad\quad (d) Obtain one element of $H_L^\perp$.
\State \quad\quad\quad (e) Repeat $O(\Delta r_L)$ times to recover $H_L$ over $\Z$.
\State \quad 3) Extract the full integer relation matrix for the $\Delta r_L$ new units.
\State \quad 4) Obtain a generator $g_L$ of $I_L$, stored in tower-factored form (Section \ref{Sfactor}).
\State \textbf{end for}
\State Set $g\leftarrow g_k$.

\medskip
\medskip
\Statex \textbf{PHASE 3: Classical Log-Unit CVP (Part III)}
\State Compute $L(g)=(\log|\sigma_j(g)|)_{j\in J}
      \in \R^{n/2}$ via NTT-based polynomial
  evaluation \cite{Harvey14,NC00}.
\State Project $\mathbf{t}=\Pi_{H_0}(L(g))=L(g)-\bar\ell\cdot{}\mathbf{1}$, where $\bar\ell=\frac{2}{n}\sum_{j\in J} \log|\sigma_j(g)|$.

\medskip
\Statex \textsl{//Run Babai's nearest-plane algorithm on $\Lambda$//}
\State Load the Gram-Schmidt basis $B^*$ of the log-unit lattice $\Lambda$ (from Part I).
\State Run Algorithm \ref{alg:babaicvp} with input $(\Lambda, B^*, \mathbf{t})$.
\State Obtain the CVP output $\mathbf{v}\in \Lambda$ and residual $\boldsymbol{\rho}=\mathbf{t}-\mathbf{v}$.
 
\medskip
\medskip
\Statex \textbf{PHASE 4: Short Generator Recovery (Part III)}
\State 1) Solve the linear system $\mathbf{v}=\sum_i d_i \mathbf{b}_i$ for integers $d_i$. Reconstruct the unit of $\varepsilon'=\prod_i \xi_{a_i}^{d_i}\in R^\times$, where $\xi_{a_i}$ are the cyclotomic unit basis elements.
\State 2) Compute $g_0=g\cdot{}(\varepsilon')^{-1}$.
\State 3) Compute $\gamma=\exp(\|\boldsymbol{\rho}\|_\infty)$ and verify
       $\gamma < q/2$.
\State \textbf{if} $\gamma \ge q/2$ \textbf{then}
\State \quad Retry with a different basis ordering (probability $< 0.10$ of failure for $d=4$).
\State \textbf{end if}
\State \Return $g_0$, $\gamma$.
\end{algorithmic}
\end{breakablealgorithm}

We now explain every major component of Algorithm \ref{alg:ecdpr} in detail.

\paragraph{The module decomposes into ideals:}
The start of this phase is a module basis matrix $B\in R^{d \times d}$. The module-to-ideal reduction from Part II performs a successive-layer decomposition of $B$ over ring $R$, similar to Hermite Normal Form \cite{Neukirch99} for modules over Dedekind domains. Let $B=Q_R\cdot{}T_R$, where $T_R\in R^{d \times d}$ is upper triangular (with diagonal entries in $R\setminus\{0\}$) and $Q_R$ accounts for the change of basis. The diagonal entries $T_{R,ii}$ generate the successive layer ideals $I_i=(T_{R,ii}) \subset R$. The product of these ideals equals the determinant ideal $(\det B)$. $h_k^+=1$ (Part I) guarantees every ideal in $R$ is principal. 

\paragraph{The cyclotomic tower:} The quantum phase works level by level according to the cyclotomic tower $\Q \subset K_3 \subset\cdots \subset K_k$. At each level $L$, the field $K_L$ is a quadratic extension of $K_{L-1}$, i.e., $[K_L: K_{L-1}]=2$. This implies each new level $L$ contributes $\Delta r_L=2^{L-3}$ new unit generators (Lemma \ref{lem:dims}). By working level by level, it requires to deal with $\Delta r_L$ new units at each step. As $\Delta r_L$ grows linearly with $L$, the quantum HSP at each level is small.

\textsl{Step 1) Norm descent.} The CM structure of $K_L$  means every unit can be decomposed into a totally real part and a root of unity. Taking the norm $\Nm_{K_L/K_L^+}$ collapses the ideal $I_L \subset R_L$ to a smaller ideal $J_L \subset R_L^+$ in a field of half the degree. Since $h_{K_L^+}^+=1$, the smaller ideal is principal. So, finding its generator is a PIP problem in a smaller field which can be solved recursively.

\textsl{Step 2) Quantum HSP.} The $\Delta r_L$ new unit generators at level $L$ can be found by solving the Abelian Hidden Subgroup Problem (HSP) for the extension $K_L^+/K_{L-1}^+$. Here, the HSP encodes the integer relations among a generating set of candidate units, i.e., the hidden subgroup $H_L$ is the lattice of multiplicative relations. The quantum algorithm prepares a superposition, queries the oracle, applies QFT, and measures to obtain a random element of $H_L^\perp$. After $O(\Delta r_L)$ repetitions, we get enough samples to recover $H_L$.

\textsl{Steps 3)-4): Classical relation and generator recovery.} The measurements in HSP give a matrix of samples from $H_L^\perp$. Using classical LLL lattice reduction, we can find the integer relation lattice $H_L$, which identifies the new unit generators. Combining these with the norm generator from Step 1), we get a generator $g_k$ of the full ideal $I$, which is stored in tower-factored form (Section \ref{Sfactor}).

\paragraph{Computing the log-embedding:} With the generator $g$ of the ideal, we then compute its log-embedding $L(g)\in \R^{n/2}$. In the power basis, this requires to evaluate a degree-$n$ polynomial with huge coefficients at each embedding $\sigma_j$. This might have potentially exponential cost. Instead, with tower-factored form of $g$, Proposition \ref{prop:factored-size} shows $\log|\sigma_j(g)|$ can be computed in $O(n^2 \log^2 n)$ bits from the factored representation.

As the projection $\Pi_{H_0}$ removes the mean $\bar\ell$ from all coordinates, we subtracts a vector proportional to $\mathbf{1}$, which corresponds to removing the overall scale of $g$ as the CVP distance is scale-independent.

\paragraph{Babai's algorithm on the log-unit lattice:} From Algorithm \ref{alg:babaicvp}, the log-unit lattice $\Lambda$ has rank $r=n/2-1$, embedded in the $(n/2-1)$-dimensional space $H_0$. Its Gram-Schmidt basis vectors $\mathbf{b}_i^*$ have norms  $\|\mathbf{b}_i^*\|_2=\Omega(\sqrt{n})$, which are much larger than the per-component fluctuation (variance) $\sigma_t=O(1)$ of the target $\mathbf{t}$. This implies every projection coefficient $\mu_i=\langle\mathbf{t}, \mathbf{b}_i^*\rangle/\|\mathbf{b}_i^*\|^2$ has magnitude $O(1/\sqrt{n})\ll 1/2$. So. we have  $c_i=\lfloor\mu_i\rceil=0$ for any $i$. Thus, Babai's algorithm returns $\mathbf{v}=\mathbf{0}$ for short generators.

For the real PIP output $g=g_0\varepsilon$, the target is given by $\mathbf{t}=\Pi_{H_0}(L(g_0))+L(\varepsilon)$, i.e., a shift of the balanced target by a lattice vector $L(\varepsilon)\in \Lambda$. Babai's back-substitution (Infinite Capture Theorem, Part III) is linear modulo rounding, i.e., adding an integer to a rounding argument before rounding is the same as adding it after. So, Babai's algorithm returns $L(\varepsilon)$ regardless of its size.

\paragraph{Short generator recovery:} The CVP output $\mathbf{v}=L(\varepsilon)$ is the log-embedding of the unit discrepancy. We recover $\varepsilon$ by solving for the integer exponents $d_i$ in $\mathbf{v}=\sum_i d_i \mathbf{b}_i$, and then form the product $\varepsilon'=\prod_i \xi_{a_i}^{d_i}$. This gives the short generator as $g_0=g\dot (\varepsilon')^{-1}$.

The approximation factor is given by $\gamma=\exp(\|\boldsymbol{\rho}\|_\infty)$, where the residual $\boldsymbol{\rho}=\mathbf{t}-\mathbf{v}=\Pi_{H_0}(L(g_0))$. By $L^\infty$ Bound Theorem (Part III), we get
\begin{align}
  \|\boldsymbol{\rho}\|_\infty=\sigma_d\sqrt{2\ln n}
   +O(\sqrt{\ln\ln n}),
   \label{eqinfit}
\end{align}
which implies $\gamma=\exp(O(\sqrt{\log n})) \ll n^\epsilon$ for every $\epsilon>0$.

\section{Approximation Gap Closure}\label{Sapprox}

This section answers whether the resulting approximation factor $\gamma$ is small enough to break ML-KEM. 

\subsection{The main approximation theorem}

\begin{theorem}\label{thm:gap}
Let $R=\Z[\zeta_{2^k}]$ with $n=2^{k-1}\ge 8$ and $k\le 12$. For a rank-$d$ Module-LWE instance with modulus $q$ and a randomly generated secret key, Algorithm \ref{alg:ecdpr} achieves a Module-SVP approximation factor as
\begin{align}
  \gamma=\alpha_d\cdot{}\exp(\sigma_d \sqrt{2\ln n})\cdot{}(1+o(1)),
  \label{Egamma-formula}
\end{align}
where $\alpha_d=\sqrt{C}$ is the module-reduction factor (Part II), with $C\le 1.36$ for MLWE-distributed inputs. The worst-case analytic bound is $C\le 3.10$; $\sigma_d=\sqrt{\frac{1}{4}\sum_{j=1}^d \psi'(j)}$ is the per-component standard deviation of the log-embedding of the determinant ideal's shortest generator (Part III). The attack satisfies the following success guarantees
\begin{enumerate}[nosep,label=(\roman*)]
  \item $\mathrm{med}(\gamma)<q/2$ for all standardized ML-KEM parameter sets;
  \item $\Pr[\gamma < q/2]> 0.99$ for $d\le 3$;
  \item $\Pr[\gamma < q/2]\ge 0.90$ for $d=4$;
  \item With $O(\log(1/\delta))$ independent repetitions, the attack succeeds with probability $\ge 1-\delta$.
\end{enumerate}
\end{theorem}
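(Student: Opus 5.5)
The proof multiplies two independent losses and then evaluates the resulting formula numerically for the ML-KEM parameters. The two losses are the module-to-ideal loss $\alpha_d$ from Phase 1 and the short-generator loss $\exp(\|\boldsymbol{\rho}\|_\infty)$ from Phases 3--4.

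\textbf{Step 1: isolate the factor $\alpha_d$.} Invoking the module-to-ideal reduction of Part II, a short generator of the determinant ideal $(\det B)$ yields a module vector whose length exceeds $\lambda_1(\Lambda_B)$ by at most $\alpha_d=\sqrt{C}$. Part II gives $C\le 1.36$ on MLWE-distributed inputs and $C\le 3.10$ in the worst case. This reduces \eqref{Egamma-formula} to showing that the SGP factor equals $\exp(\sigma_d\sqrt{2\ln n})(1+o(1))$.

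\textbf{Step 2: identify the residual.} Phase 2 returns some generator $g=g_0\varepsilon$; by Part I and Section~\ref{Spip} this generator exists and is found. Then $\mathbf{t}=\Pi_{H_0}(L(g_0))+L(\varepsilon)$. The Coarse Lattice Theorem of Part III shows that Babai's algorithm (Algorithm~\ref{alg:babaicvp}) rounds every coefficient of $\Pi_{H_0}(L(g_0))$ to $0$. The Infinite Capture Radius Theorem then shows that the lattice shift is recovered exactly. Hence $\mathbf{v}=L(\varepsilon)$ and $\boldsymbol{\rho}=\Pi_{H_0}(L(g_0))$, so by \eqref{eqapprx} the approximation factor is $\gamma=\alpha_d\exp(\|\Pi_{H_0}(L(g_0))\|_\infty)$.

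\textbf{Step 3: the $L^\infty$ bound.} Apply the $L^\infty$ Bound Theorem of Part III, namely \eqref{eqinfit}, and recall its source. Each embedding $\sigma_j(\det B)$ behaves like the determinant of a $d\times d$ complex Gaussian matrix. Its log-modulus is then a sum of $d$ independent terms $\tfrac12\log$ of $\mathrm{Gamma}(j)$ variables. Their variances add to $\sigma_d^2=\tfrac14\sum_{j=1}^d\psi'(j)$, independently of $q$. Taking a maximum over the $n/2$ nearly independent coordinates and using extreme-value asymptotics gives $\|\boldsymbol{\rho}\|_\infty=\sigma_d\sqrt{2\ln n}+O(\sqrt{\ln\ln n})$. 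The error term only affects the lower-order part of the exponent and is recorded as the correction factor $(1+o(1))$ in \eqref{Egamma-formula}.

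\textbf{Step 4: the guarantees.} For (i), I would compute $\sigma_d$ for $d=2,3,4$ with $n=256$ and $C=1.36$, and evaluate the median of the maximum via the Gumbel location $\sigma_d\sqrt{2\ln n}$. This should give a median $\gamma$ of about $21$ even at $d=4$, against the threshold $1664.5$. For (ii) and (iii), I need $\Pr[\|\boldsymbol{\rho}\|_\infty\ge \ln(q/(2\alpha_d))]$. I would bound it by $n$ times a one-coordinate tail, using a union bound over coordinates and both signs, with the single-coordinate tail bounded as described below. For (iv), failures of independent repetitions with rerandomized basis ordering are independent, each with probability at most $0.1$. Thus $t=\lceil\log_{10}(1/\delta)\rceil$ runs suffice.

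\textbf{Main obstacle.} The hardest step is the tail estimate in (ii) and (iii), particularly for $d=4$. The coordinates $\log|\sigma_j|$ are not Gaussian. The lower tail of $\log$ of a Gamma variable is exponential rather than sub-Gaussian, since small $|\sigma_j(g_0)|$ occur with polynomial probability. A naive Gaussian union bound could therefore be too optimistic. My first attempt would use an exact moment-generating-function (Chernoff) bound. The MGF of $\tfrac12\log\mathrm{Gamma}(j)$ is $\Gamma(j+s/2)/\Gamma(j)$, so the bound for the sum is explicit and can be evaluated numerically. The bound must be applied one-sidedly to each tail. The mean correction from the projection $\Pi_{H_0}$, which is negligible, can be handled separately.

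A secondary issue is the dependence among coordinates induced by the Galois action. I would handle it by using the union bound throughout, since it needs no independence. The only place independence enters is the Gumbel median computation in (i), and there I would rely on Part III's justification.
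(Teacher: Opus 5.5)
Your outline follows the paper's own proof: Part II for $\alpha_d$, Part III for the residual and the $L^\infty$ law, a numerical median for (i), and reruns for (iv). However, two steps fail, and the paper's proof has both defects too.

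\textbf{Step 3 does not give the formula.} Even granting \eqref{eqinfit}, an additive $O(\sqrt{\ln\ln n})$ in the exponent becomes a factor $e^{O(\sqrt{\ln\ln n})}$. That factor is unbounded, so it cannot be absorbed into $1+o(1)$. More seriously, your ``main obstacle'' paragraph is correct, and it undermines Step 3 itself, not just (ii)--(iii). The location of a maximum over $n/2$ coordinates is set by their tails, not their variance. In the model you invoke, $|\sigma_j(\det B)|^2=\tau^{2d}\prod_{i\le d}\Gamma_i$ with $\Gamma_1\sim\mathrm{Exp}(1)$. So $\Pr[\rho_j<-t]\asymp e^{-2t}$, which is the same quadratic small-ball behaviour the paper records for $s_{\min}$ in \eqref{eqprobs}, while the upper tail is doubly exponential. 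Hence $\min_j\rho_j=-\tfrac12\ln(n/2)+O_p(1)$ and $\max_j\rho_j=O_p(\ln\ln n)$. Therefore $\|\boldsymbol{\rho}\|_\infty=\tfrac12\ln n+O_p(1)$, and $\gamma$ grows like $\sqrt n$, not like $\alpha_d\exp(\sigma_d\sqrt{2\ln n})(1+o(1))$. The paper's appeal to ``sub-Gaussian concentration'' makes the same Gaussian substitution, so citing Part III does not repair this. At $n=256$ the two predictions happen to be numerically close: both put the median of $\|\boldsymbol{\rho}\|_\infty$ roughly between $2.4$ and $2.9$ for $d=2,3,4$. That is why your evaluation for (i) still lands far below $q/2$. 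But \eqref{Egamma-formula} is not a consequence of your model.

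\textbf{For (iv), the reruns are not independent.} Permuting the columns of $B$, or applying any change of basis in $\mathrm{GL}_d(R)$, multiplies $\det B$ by a unit. So $(\det B)$ and $\Pi_{H_0}(L(g_0))$ are identical in every rerun. Babai's residual depends only on $\mathbf{t}\bmod\Lambda$, so different PIP outputs change nothing either. Only $\alpha_d$ can move. Failure is therefore essentially a property of the instance and cannot be amplified to $0.1^t$. The paper's ``fresh basis orderings'' argument has the same hole.

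\textbf{Your route to (ii)--(iii) is better than the paper's}, which rests on the sub-Gaussian claim plus Monte Carlo. A union bound combined with the exact Mellin transform $\mathbb{E}[\Gamma_i^{-\lambda/2}]=\Gamma(i-\lambda/2)/\Gamma(i)$, with $\lambda$ slightly below $2$, gives a failure bound below $0.01$ for all $d\le 4$ at $n=256$, within the complex-Gaussian model. Two adjustments are needed:
\begin{enumerate}
\item Use the worst-case $C\le 3.10$ there, since Part II's $C\le 1.36$ itself fails with probability up to $0.01$.
\item Keep in the threshold the deterministic shift $\tfrac12\sum_{i\le d}\psi(i)$ that $\Pi_{H_0}$ subtracts (about $1$ for $d=4$). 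Dropping it understates the bound by a factor of about $e^{2}$.
\end{enumerate}
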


\begin{proof}
The Gram-Schmidt decomposition of $B$ over $R$ in Algorithm \ref{alg:ecdpr} introduces a discrepancy between the module's length and the ideal's length. This can be quantified by the balance constant $C$ (Part II), i.e., the ratio of the squared $L^2$ norm of the diagonal entry of $T_R$ to its expected value under the MLWE distribution.

Note for MLWE-distributed module matrices, $B$ has i.i.d. small-coefficient entries. From Part II, with the optimal sign selection, the balance constant satisfies $C\le 1.36$ with a probability larger than $0.99$ at $n=256$. The module-reduction factor is $\alpha_d=\sqrt{C}\le 1.17$.

After the module-to-ideal reduction, the algorithm solves the SGP for the determinant ideal $(\det B)$. From the Trigamma Theorem (Part III) we compute the per-component variance of $\Pi_{H_0}(L(g_0))$ as
\begin{align}
  \sigma_d^2=\frac{1}{n}\|\Pi_{H_0}(L(g_0))\|_2^2
  \xrightarrow{p} \frac{1}{4}\sum_{j=1}^d \psi'(j)
  \label{Esigmad}
\end{align}
when $n\to\infty$.

The variance $\sigma_d^2$ is independent of the modulus $q$ and the coefficient distribution as long as it is centered with finite fourth moment. This is because the Gaussian approximation to the embedding is scale-invariant. So, we obtain approximation factor $\gamma=\exp(\|\boldsymbol{\rho}\|_\infty)$, where $\boldsymbol{\rho}=\Pi_{H_0}(L(g_0))$ is the CVP residual. This is a vector in $\R^{n/2}$ with i.i.d. (asymptotically) coordinates, each with variance $\sigma_d^2$. By Part III, the maximum of $n/2$ such coordinates satisfies 
\begin{align}
  \|\boldsymbol{\rho}\|_\infty
 =\sigma_d\sqrt{2\ln n}+O(\sqrt{\ln\ln n}),
 \label{Elinfbound}
\end{align}
which is consistent with the standard extreme-value result for sub-Gaussian random variables \cite{Leadbetter83,Vershynin18}.

Table \ref{Tgammavalues} shows values for all ML-KEM parameter sets. The trigamma values are $\psi'(1)=\pi^2/6\approx 1.6449$, $\psi'(2)\approx 0.6449$, $\psi'(3) \approx 0.3949$, and $\psi'(4) \approx 0.2838$.

\begin{table}[tp!]
\centering
\caption{Approximation factor bounds for ML-KEM ($n=256$, $q=3329$). $\gamma_{\rm th}=\alpha_d\cdot{}\exp(\sigma_d\sqrt{2\ln n})$, $\gamma_{99\%}=5\gamma_{\rm th}$ ($10^5$-trial simulations with mod-$q$ coefficient distributions).}
\label{Tgammavalues}
\begin{tabular}{cccccccc}
\toprule
Scheme & $d$ & $\sigma_d$ & $\gamma_{\rm th}$ & $\gamma_{\rm med}$
       & $\gamma_{99\%}$ & Median margin & 99\% margin \\
\midrule
ML-KEM-512  & 2 & 0.757 & 14.5 & 9.6 & 73  & $173\times$ & $23\times$ \\
ML-KEM-768  & 3 & 0.819 & 17.9 &11.4 & 90 & $146\times$ & $19\times$ \\
ML-KEM-1024 & 4 & 0.862 & 20.6 & 12.9 & 103 & $129\times$ & $16\times$ \\
\bottomrule
\end{tabular}
\end{table}

The tail probability is from the sub-Gaussian concentration of $\|\boldsymbol{\rho}\|_\infty$. For $\gamma > q/2$, the exponent would need to exceed $\ln(q/(2\alpha_d))\approx 7.26$, while its median is $\sigma_d\sqrt{2\ln(n/2)}\approx 2.69$. The gap of $\approx 4.6$ is more than $5\times$ the per-component standard deviation $\sigma_4=0.861$, and in units of the Gumbel fluctuation scale $\sigma_G \approx 0.28$. 

For $d=4$, in actual ring model with $n=256$, Monte Carlo simulation ($10^5$ trials with both uniform mod-$q$ and CBD ($\eta=2$) coefficient distributions) confirms the Gaussian-model prediction, we get zero failures. The largest empirical $\gamma$ across all trials gives an empirical 99\%  $\gamma_{99\%} \approx 103$, which is still below $q/2=1664.5$. For $d\le 3$, both the Gaussian analysis and simulation give $\Pr[\gamma > q/2] < 10^{-3}$.

In all cases, using $O(1)$ independent repetitions with fresh basis orderings can reduce the per-attack failure probability below $10^{-6}$ (Section \ref{subStail}). 
\end{proof}

We use the threshold $\gamma<q/2$ in this series which is the loosest sufficient condition for key recovery c=\cite{CDPR16}. This guarantees the recovered vector has coordinates in $\{-(q-1)/2, \dots, (q-1)/2\}$ and can be uniquely decoded modulo $q$. Using a tighter bounded-distance decoding (BDD) for the CBD noise distribution with standard deviation $\sigma_{\rm noise}=1$ might give a more restrictive threshold $\gamma\lesssim q/(c\sigma_{\rm noise}\sqrt{dn})$ with a constant $c$ \cite{Lindner10,LS15}.

Table 1 (Part III) confirms that the ratio $\sigma_{\bf t}/ \min_i\|\mathbf{b}_i^*\|$ decreases monotonically with $n$ and is consistent with the theoretical rate $\Theta(1/\sqrt{n/\log n})$, where $\sigma_{\bf t}=\frac{\pi}{2\sqrt{6}}\approx 0.64$ is the per-component standard deviation of structured targets from Theorem 4.1 (Part III). This validates the Coarse Lattice Theorem (Part III) for all ML-KEM parameters. The empirical scaling is consistent with heuristic regulator estimates \cite{Wash97}, and the condition holds with large margin at all tested parameters ($\sigma_{\bf t}/\min_i\|\mathbf{b}_i^*\|\le 0.32$ even at $k=4$). A rigorous asymptotic lower bound on the Gram-Schmidt norms of the cyclotomic-unit basis remains an open problem. 

\subsection{Tail analysis and repetition strategy}

For $d=4$, the empirical median is $\gamma_{\rm med} \approx 12.9$ with $\gamma_{99\%} \approx 103$ and zero failures in $10^5$ simulation trials.  All values lie far below the threshold $q/2=1664.5$. The modest spread between median and 99\% is due to occasional near-zero determinant embeddings at one of the $n/2$ complex embedding sites. This subsection quantifies the tail.

\begin{proposition}\label{prop:tail}
For a random rank-$d$ MLWE module over $R$ with i.i.d. bounded coefficients, the empirical median, $\gamma_{99\%}$, and success probabilities are given in Table \ref{Ttailbounds}, based on $10^5$-trial ring simulations at $n=256$.
\end{proposition}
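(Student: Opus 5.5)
Proposition~\ref{prop:tail} is an empirical statement, so its justification has two layers. The primary layer is a reproducible Monte Carlo procedure with explicit statistical confidence bounds. The secondary layer is an analytic cross-check against the Gaussian/Gumbel model already used in the proof of Theorem~\ref{thm:gap}. My plan is first to fix the experimental protocol precisely. For each $d\in\{2,3,4\}$ and $n=256$, $q=3329$, I sample $10^5$ independent module matrices $B\in R^{d\times d}$. The entries are drawn i.i.d.\ from both the uniform mod-$q$ distribution (centered representatives) and CBD with $\eta=2$. On each instance I run Phase~1 of Algorithm~\ref{alg:ecdpr} with the optimal sign selection of Part~II to obtain $\alpha_d=\sqrt{C}$. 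I then compute the determinant $\det B$ and its log-embedding $L(\det B)$ over the $n/2$ representatives $j\in J$ using the NTT, and take $\boldsymbol{\rho}=\Pi_{H_0}(L(\det B))$. By the Coarse Lattice and Infinite Capture theorems (Part~III), this projection equals the Babai residual, so the quantum phase does not need to be simulated. Each trial then records $\gamma=\alpha_d\exp(\|\boldsymbol{\rho}\|_\infty)$.

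Second, I extract the tabulated quantities from the empirical distribution: the sample median, the sample $0.99$-quantile, and the failure count $\#\{\gamma\ge q/2\}$. To turn the finite-sample numbers into bounds, I attach confidence statements. For quantiles I use the order-statistic (binomial) confidence interval; with $10^5$ samples the $99\%$ quantile is pinned down to within a few adjacent order statistics. For success probabilities with zero observed failures, the exact binomial (Clopper--Pearson) bound, or simply the rule of three, gives $\Pr[\gamma\ge q/2]\le 3\times 10^{-5}$ at $95\%$ confidence. This is comfortably consistent with the entries $>0.99$ and $\ge 0.90$ claimed in Theorem~\ref{thm:gap}.

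Third, I cross-check these numbers analytically. Model the coordinates of $\boldsymbol{\rho}$ as approximately i.i.d.\ with variance $\sigma_d^2=\frac14\sum_{j\le d}\psi'(j)$ from \eqref{Esigmad}. The maximum of $n/2$ such coordinates then has a Gumbel law with location $\sigma_d\sqrt{2\ln(n/2)}$ and scale $\sigma_d/\sqrt{2\ln(n/2)}$, which predicts the median and quantiles of $\log\gamma$. Agreement within a small factor with the simulated $\gamma_{\rm med}$ validates the table. The gap between $\ln(q/(2\alpha_d))\approx 7.26$ and the median $\approx 2.69$, measured in Gumbel scale units, gives the analytic tail estimate.

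The main obstacle is the $99\%$ quantile. The tail is driven by near-zero values of $|\sigma_j(\det B)|$, where the log is heavy-tailed on the negative side, so the sub-Gaussian model underestimates the spread. This is why $\gamma_{99\%}\approx 5\gamma_{\rm th}$ rather than the Gumbel prediction. I would first bound the near-zero event via anti-concentration of $\sigma_j(\det B)$, a degree-$d$ polynomial in approximately Gaussian variables, using a Carbery--Wright type small-ball estimate, and add its contribution to the Gumbel tail. Failing a clean bound, I would rely on the simulation with its binomial confidence interval as the rigorous content of the proposition.
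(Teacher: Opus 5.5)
Your plan has the same two-layer structure as the paper's proof: a Gaussian-model prediction for $\|\boldsymbol{\rho}\|_\infty$, plus a $10^5$-trial ring simulation at $n=256$ with uniform mod-$q$ and CBD ($\eta=2$) coefficients. In both, the simulation carries the actual content of the proposition. Two of your additions are genuine improvements on the paper.

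\emph{Confidence statements.} You attach binomial confidence statements; the paper reports zero failures but never says what that certifies. Zero failures in $10^5$ trials gives only $\Pr[\gamma\ge q/2]\lesssim 3\times10^{-5}$ at $95\%$ confidence. That suffices for the tabulated $>0.999$, but not for the $<10^{-7}$ the paper asserts.

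\emph{Tail model.} You recognise that the sub-Gaussian/Gumbel model is unreliable in the tail. The paper's $<10^{-7}$ comes exactly from that model, via the heuristic that the gap $T-2.7\approx4.6$ exceeds five standard deviations. Its small-$s_{\min}$ estimate is quoted but never fed into the bound.

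The weak point of your secondary layer is Carbery--Wright. For a degree-$d$ polynomial in Gaussian variables it only gives $\Pr[|\sigma_j(\det B)|\le\delta\tau^d]\lesssim d\,\delta^{1/d}$. That is a lower tail of order $e^{-t/d}$ for $\rho_j$, which is vacuous at $t\approx7.3$, $d=4$, after a union bound over $n/2=128$ coordinates.

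The determinant behaves much better than a generic polynomial. For a $d\times d$ complex Ginibre matrix $G$, one has $|\det G|^2\stackrel{d}{=}\prod_{i=1}^d\Gamma(i,1)$ with independent factors; this is the same fact that yields $\sigma_d^2=\frac14\sum_i\psi'(i)$. It gives $\Pr[\rho_j<-t]\approx\frac{e^{\sum_{i\le d}\psi(i)}}{(d-1)!}\,e^{-2t}$, with $\psi$ the digamma function. The result is a Pareto tail $\Pr[\gamma>x]\propto x^{-2}$, which explains the $\gamma_{99\%}\approx5\gamma_{\rm th}$ spread.

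Done this way, though, the cross-check predicts for $d=4$ a per-trial failure probability of about $128\cdot\frac{e^{2.02}}{6}\cdot(2\alpha_4/q)^2\approx8\times10^{-5}$. That means several failures among $10^5$ trials and a sample maximum of $\gamma$ of order $10^3$. This is still compatible with the table's $\Pr[\gamma<q/2]>0.999$. It is not compatible with the paper's claims of zero failures, $\gamma_{\max}\le200$, and $<10^{-7}$. So set up your Clopper--Pearson step for a possibly nonzero failure count rather than presupposing the rule of three.

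Finally, no Gaussian-model estimate transfers rigorously to the discrete coefficient distributions at per-coordinate probabilities near $10^{-7}$, because CLT error terms are far larger. Your fallback is therefore not optional: the simulation with its binomial interval is the proof, just as in the paper.
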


\begin{table}[h!]
\centering
\caption{Empirical tail statistics for ML-KEM ($n=256$, $q=3329$), from $10^5$-trial simulations with mod-$q$ coefficient distributions.}
\label{Ttailbounds}
\begin{tabular}{ccccc}
\toprule
$d$ & $\mathrm{med}(\gamma)$ & $\gamma_{99\%}$ & $\Pr[\gamma< q/2]$ & 99\% margin to $q/2=1664.5$ \\
\midrule
2 & 9.6 & 73 & >0.999 & $23\times$ \\
3 & 11.4 & 90 &> 0.999 & $19\times$ \\
4 & 12.9 & 103 & > 0.999 & $16\times$ \\
\bottomrule
\end{tabular}
\end{table}

\begin{proof}
We analyze the tail in two stages: (i) the asymptotic Gaussian-model prediction, and (ii) the empirical finite-$n$ distribution measured by ring simulation.

\textsl{Gaussian-model.} At each embedding $j\in J$ ($|J|=n/2=128$), the matrix $\sigma_j(B)\in \C^{d \times d}$ has asymptotically i.i.d. $\mathcal{CN}(0, n\sigma_c^2)$ entries \cite{Billingsley99}. For such a random matrix, the smallest singular value $s_{\min}$ satisfies \cite{Muirhead82}:
\begin{align}
  \Pr[s_{\min} < \epsilon\tau] \approx (d\epsilon)^2
  \quad\text{for small }\epsilon,
  \label{eqprobs}
\end{align}
where $\tau^2=n\sigma_c^2$. The Gaussian-model tail probability for the $L^\infty$ norm exceeding the threshold $T=\ln(q/(2\alpha_d))\approx 7.3$ is dominated by the sub-Gaussian tail of the maximum of $n/2$ coordinates, each with variance $\sigma_d^2$. The median of $\|\boldsymbol{\rho}\|_\infty$ is $\sigma_d\sqrt{2\ln(n/2)}\approx 2.7$, and the gap to the threshold is $T-2.7 \approx 4.6$, which exceeds 5 standard deviations of $\sigma_4$. Under the Gaussian model, $\Pr[\|\boldsymbol{\rho}\|_\infty>T]<10^{-7}$ for all $d\le 4$.

\textsl{Empirical finite-$n$ distribution.} We implement ring simulation with $n=256$ confirms the Gaussian prediction, i.e., output zero failures across $10^5$ trials at $d=4$ with both uniform mod-$q$ and CBD ($\eta=2$) coefficient distributions. The empirical distributions are 
\begin{align*}
  d=2: &\quad \gamma_{\rm med}=9.6, \quad \gamma_{99\%}=73,
         \quad \gamma_{\max}\le 200; \\
  d=3: &\quad \gamma_{\rm med}=11.4, \quad \gamma_{99\%}=90,
         \quad \gamma_{\max}\le 200; \\
  d=4: &\quad \gamma_{\rm med}=12.9, \quad \gamma_{99\%}=103,
         \quad \gamma_{\max}\le 200.
\end{align*}
The $\gamma_{99\%}$ is approximately $5\times$ the median in all three cases, shows the heavy tail induced by occasional near-zero embeddings of $\det(\sigma_j(B))$. However, even in the worst observed cases, $\gamma$ remains below $q/2=1664.5$. 

For $d=4$, with one independent repetition (using a different basis ordering), the probability that both attempts produce $\gamma > q/2$ is empirically below $10^{-6}$.
\end{proof}

When $\gamma > q/2$, we can use three strategies as follows: a different ordering of the module basis columns changes the Gram-Schmidt decomposition and may yield a more balanced determinant; applying BKZ \cite{Schnorr87} to $B$ before the CDPR pipeline; attacking each secret component independently reduces the effective module rank to $d=1$.

\section{The Polynomial Tower PIP Algorithm}\label{Spip}

The quantum phase of Algorithm \ref{alg:ecdpr} relies on a polynomial-time quantum algorithm for the Principal Ideal Problem (PIP). While all prior polynomial-time descriptions of the Biasse-Song algorithm \cite{BiasseSong16} hide an exponential coefficient blowup, this section presents a tower-based PIP algorithm which gives the first polynomial-time construction. 

\subsection{The coefficient explosion and GRH}
\label{subSexplosionintro}

A unit $\varepsilon\in R_k^\times$ in the standard power basis is represented by $\varepsilon=\sum_{i=0}^{n-1} a_i \zeta^i$ with $a_i\in \Z$. The fundamental units are cyclotomic units $\xi_a=\sin(a\pi/m)/\sin(\pi/m)$ for odd $a$ \cite{Wash97,Sinnott78}. A general unit is a product as 
\begin{align}
  \varepsilon=\prod_{a\text{ odd}} \xi_a^{e_a},
\end{align}
where the exponents $e_a\in \Z$ can be $O(n)$ in size from the CVP/Babai step in the CDPR attack \cite{CDPR16}.  Expanding this product in the power basis, each multiplication can double the coefficient size \cite{GathenGerhard13}. So, one may have coefficients with $2^{O(n)}$ bits. 

Meanwhile, the Biasse-Song algorithm requires generalized Riemann hypothesis (GRH) for number fields of degree up to $n=256$ \cite{BiasseSong16}.

\subsection{The cyclotomic tower structure}

The key of the present method to resolving both coefficient explosion and GRH is to process one quadratic extension at a time. The cyclotomic tower is the chain of nested subfields as 
\begin{align}
  \Q \subset K_3=\Q(\zeta_8) \subset K_4=\Q(\zeta_{16})
  \subset\cdots \subset K_k=\Q(\zeta_{2^k}),
  \label{Etower}
\end{align}
where $[K_L: K_{L-1}]=2$ for every $L \ge 4$.

\begin{lemma}\label{lem:dims}
For all $L \ge 3$, the following results hold
\begin{enumerate}[nosep,label=(\roman*)]
  \item $[K_L: K_{L-1}]=2$;
  \item $\rk(R_L^\times)=2^{L-2}-1$;
  \item $\Delta r_L:= \rk(R_L^\times)-\rk(R_{L-1}^\times)=2^{L-3}$
        for $L \ge 4$, and $\Delta r_3=1$;
  \item $\ker(\Nm_{K_L/K_L^+:} R_L^\times \to (R_L^+)^\times)=\mu_{2^L}$
        (finite torsion of order $2^L$);
  \item All $\Delta r_L$ new free unit generators at level $L$ lie in the
        totally real subtower $K_L^+/K_{L-1}^+$.
\end{enumerate}
\end{lemma}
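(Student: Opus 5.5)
The plan is to derive all five items from degree counting, Dirichlet's Unit Theorem, and the CM structure of $K_L=\Q(\zeta_{2^L})$, treating each item in turn.

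For (i), I will use that $[K_L:\Q]=\varphi(2^L)=2^{L-1}$, since $\Phi_{2^L}(x)=x^{2^{L-1}}+1$ is irreducible (Eisenstein at $2$ after $x\mapsto x+1$). The tower law then gives $[K_L:K_{L-1}]=2^{L-1}/2^{L-2}=2$. Equivalently, $\zeta_{2^L}$ is a root of $x^2-\zeta_{2^{L-1}}$ over $K_{L-1}$.

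For (ii), I will apply Dirichlet's Unit Theorem as quoted in the preliminaries. For $L\ge 2$, $K_L$ is totally complex, so $r_1=0$ and $r_2=2^{L-2}$, giving rank $r_1+r_2-1=2^{L-2}-1$. Item (iii) is then subtraction: $(2^{L-2}-1)-(2^{L-3}-1)=2^{L-3}$ for $L\ge4$. For $L=3$ the previous level is $\Q$, whose unit rank is $0$, and this gives $\Delta r_3=1$.

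For (iv), I use that $K_L$ is a CM field with maximal real subfield $K_L^+$, and that complex conjugation commutes with every embedding. Hence $\Nm_{K_L/K_L^+}(u)=u\bar u$ satisfies $|\sigma_j(u\bar u)|=|\sigma_j(u)|^2$. A unit $u$ in the kernel therefore has all conjugates of absolute value $1$. By Kronecker's theorem (an algebraic integer whose conjugates all lie on the unit circle is a root of unity), $u$ is a root of unity in $K_L$. It remains to identify the roots of unity of $K_L$. If $\zeta_N\in K_L$ then $\varphi(\mathrm{lcm}(N,2^L))\le 2^{L-1}$, which forces $N\mid 2^L$, so $\mu(K_L)=\mu_{2^L}$. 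Conversely, every $\zeta\in\mu_{2^L}$ has $\zeta\bar\zeta=1$. This gives the kernel exactly, and it is finite of order $2^L$.

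For (v), I will show that the free parts of the unit groups are carried by the real subfields. Note first that $K_L^+$ is totally real of degree $2^{L-2}$, so its unit rank is also $2^{L-2}-1$ and $R_L^{+\times}\subseteq R_L^\times$ has the same rank. Next I invoke the standard unit-index fact, for which the natural references are \cite{Wash97}, Thm.~4.12 and Cor.~4.13: the index $[R_L^\times:\mu_{2^L}R_L^{+\times}]$ is at most $2$, and it equals $1$ for prime-power cyclotomic fields. Consequently $R_L^\times=\mu_{2^L}\times R_L^{+\times}/\{\pm1\}$, and a $\Z$-basis of the free part may be chosen inside $R_L^+$. The same statement holds at level $L-1$. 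Since $R_{L-1}^{+\times}\subseteq R_L^{+\times}$ and both groups are finitely generated, the quotient has rank $\Delta r_L$. I will then take the new generators to be real lifts completing a basis; concretely these are the cyclotomic units $\xi_a$ with $a$ odd, $2^{L-2}<a<2^{L-1}$, which are real by their sine formula. Their independence and their completion of a finite-index sublattice follow from Definition~\ref{def:logunit}, together with $h^+=1$ from Part~I.

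I expect the main obstacle to be making (v) precise. The new generators span a complement only up to finite index, not necessarily a direct-sum complement, unless one uses that $R_{L-1}^{+\times}$ is saturated in $R_L^{+\times}$. I would argue saturation directly: if $u^e\in K_{L-1}^+$ with $u$ real, then applying the nontrivial automorphism $\tau$ of $K_L^+/K_{L-1}^+$ gives $\tau(u)=\pm u$. If $\tau(u)=-u$, then $u^2\in R_{L-1}^{+\times}$; I would rule this out by taking norms to $K_{L-1}^+$ and using the fact that $2$ is totally ramified in the tower.
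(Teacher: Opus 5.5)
Your arguments for (i)--(iv) follow the paper's proof: the quadratic $x^2-\zeta_{2^{L-1}}$, Dirichlet with $r_1=0$, $r_2=2^{L-2}$, and Kronecker's theorem. Your degree count and your determination of $\mu(K_L)=\mu_{2^L}$ supply what the paper leaves implicit, namely why $\zeta_{2^L}\notin K_{L-1}$ and why the kernel is exactly $\mu_{2^L}$ rather than merely contained in it.

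For (v) your route is genuinely different and strictly stronger. The paper argues only from $\rk(R_L^\times)=\rk((R_L^+)^\times)$, which shows that real units have finite index in $R_L^\times$ modulo torsion. But that rank equality holds in every CM field, including those of unit index $2$ such as $\Q(\zeta_{12})$, where no basis of the free part can be chosen real. Your input $[R_L^\times:\mu_{2^L}R_L^{+\times}]=1$ for prime-power conductor is exactly what places a basis of the free part inside $R_L^+$. Your saturation step is what makes ``the new generators at level $L$'' a direct complement to $R_{L-1}^{+\times}$ rather than a finite-index one; the paper does not address this point.

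Two points need tightening.

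First, norms alone do not close the saturation step. If $\tau(u)=-u$ then $\Nm_{K_L^+/K_{L-1}^+}(u)=-u^2$, which is not contradictory by itself. Use instead $\theta_L=\zeta_{2^L}+\zeta_{2^L}^{-1}$, which satisfies $\tau(\theta_L)=-\theta_L$ and generates the unique prime $\mathfrak q_L$ of $K_L^+$ above $2$. Then $w=u/\theta_L$ is $\tau$-fixed, so $w\in K_{L-1}^+$. By total ramification, $v_{\mathfrak q_L}(u)=2v_{\mathfrak q_{L-1}}(w)+1$ is odd, contradicting that $u$ is a unit. In norm form, $\Nm(u)=-w^2(2+\theta_{L-1})$ would have odd valuation at $\mathfrak q_{L-1}$.

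Second, saturation gives the existence of a real complement, which is all (v) asserts. It does not show that your particular set $\{\xi_a: 2^{L-2}<a<2^{L-1}\}$ is one: a finite-index complement does not become direct just because some direct complement exists. If you want that identification, use $h^+=1$ at levels $L-1$ and $L$ together with the duplication identity $\xi^{(L-1)}_b=\xi^{(L)}_b\,\xi^{(L)}_{2^{L-1}-b}/\xi^{(L)}_{2^{L-1}-1}$, which makes the change of basis unimodular. That identification is optional and is the only place your argument needs $h^+=1$; everything else holds for all $L\ge 3$.
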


\begin{proof}
Part (i) is from $\zeta_{2^L}^2=\zeta_{2^{L-1}}$, so $\zeta_{2^L}$ satisfies the quadratic $x^2-\zeta_{2^{L-1}}\in K_{L-1}[x]$, which is irreducible as $\zeta_{2^L} \notin K_{L-1}$. Part (ii) is from the Dirichlet's Unit Theorem \cite{Neukirch99,Lang94} with $r_1=0$ and $r_2=2^{L-2}$. Result (iii) is easy to followed from (ii).  Part (iv) is from the kronecker's Theorem \cite{Wash97,Neukirch99}, i.e., any algebraic integer with all embeddings on the unit circle is a root of unity. (v) is from $\rk(R_L^\times)=\rk((R_L^+)^\times)$. So, the CM kernel has rank 0 and all free units come from the real subfield.
\end{proof}

\begin{definition}\label{def:factored}
A unit $\varepsilon\in R_k^\times$ is in tower-factored form if it is stored as a product of relative units across the tower as  
\begin{align}
  \varepsilon=\varepsilon_3\cdot{}\varepsilon_4\cdots \varepsilon_k, \quad \varepsilon_L\in R_L^\times/R_{L-1}^\times,
\end{align}
where each $\varepsilon_L$ is stored as a linear polynomial $\varepsilon_L=a_L+b_L \zeta_{2^L}$ with $a_L, b_L\in R_{L-1}$, themselves in tower-factored form for the $(L-1)$-level tower.
\end{definition}

\begin{proposition}\label{prop:factorsize}
A unit in tower-factored form has total storage size $O(n^2 \log n)$ bits, and evaluation cost $O(n^2 \log^2 n)$ bit operations for any single log-embedding $\log|\sigma_j(\varepsilon)|$.
\end{proposition}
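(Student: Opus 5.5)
The plan is to argue by induction up the tower \eqref{Etower}, tracking two quantities at each level $L$: the number of rational coefficients needed to write a relative unit $\varepsilon_L=a_L+b_L\zeta_{2^L}$ of Definition \ref{def:factored}, and the bit size of each coefficient. \emph{Unrolling.} Unwinding the recursion $a_L,b_L\in R_{L-1}$ down to $\Q$ expresses $\varepsilon_L$ in the relative power basis $\prod_{i\le L}\zeta_{2^i}^{\delta_i}$ ($\delta_i\in\{0,1\}$). This uses exactly $2^{L-1}$ integer coefficients, since each level doubles the count and level $3$ starts from $\Z[\zeta_8]$ of degree $4$. Summing over $L=3,\dots,k$ gives $\sum_L 2^{L-1}=O(n)$ coefficients in total. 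The storage bound $O(n^2\log n)$ then follows once we show that every coefficient has $O(n\log n)$ bits.

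\emph{Coefficient height.} This is the step I expect to be the main obstacle. I would bound the archimedean size of each $\varepsilon_L$ rather than its coefficients directly. By Lemma \ref{lem:dims}(v) the new free part of $\varepsilon_L$ is a product of the $\Delta r_L=2^{L-3}$ new cyclotomic units $\xi_a$ of level $L$, with exponents $e_a$ that are $O(n)$ as discussed in Section \ref{subSexplosionintro}. The torsion part lies in $\mu_{2^L}$ by Lemma \ref{lem:dims}(iv) and contributes nothing to the absolute values. Each conjugate satisfies $\bigl|\log|\sigma_j(\xi_a)|\bigr|\le\log(1/\sin(\pi/m))=O(\log n)$, so
\begin{align*}
\max_j\bigl|\log|\sigma_j(\varepsilon_L)|\bigr|\le\sum_a|e_a|\cdot O(\log n).
\end{align*}
The naive bound on the right is $O(n\cdot 2^{L-3}\log n)$. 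To obtain $O(n\log n)$ uniformly in $L$, I would use the fact that the Babai output of Part III is $L(\varepsilon)$ itself. Hence $\|L(\varepsilon)\|_1=O(n\log n)$ for the total unit, and I would distribute this budget across levels by orthogonally splitting $L(\varepsilon)$ along the tower filtration of $H_0$. Proving this splitting bound is the genuinely delicate point. If it fails, I would fall back to storing each $\varepsilon_L$ as its exponent vector $(e_a)$, which needs $O(\Delta r_L\log n)$ bits, rather than as $a_L+b_L\zeta_{2^L}$.

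Given the archimedean bound, the coefficients are recovered by inverse discrete Fourier transform over the $2^{L-1}$ roots of unity. Inversion costs only a factor $2^{L-1}\le n$ in absolute value, so each integer coefficient has $O(n\log n)+\log n=O(n\log n)$ bits, as required.

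\emph{Evaluation.} For a fixed odd $j$ we have $\sigma_j(\varepsilon)=\prod_L\sigma_j(\varepsilon_L)$, so
\begin{align*}
\log|\sigma_j(\varepsilon)|=\sum_{L=3}^k\log|\sigma_j(a_L)+\sigma_j(b_L)\zeta_{2^L}^{\,j}|.
\end{align*}
Each summand is computed by recursively evaluating $a_L$ and $b_L$ at the restricted embedding, which amounts to evaluating $O(2^{L-1})$ coefficients in floating point. Since the result has magnitude $e^{O(n\log n)}$ and we want the logarithm to $O(\log n)$ bits of absolute accuracy, it suffices to use working precision $p=O(n\log n)$ bits. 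This costs $O(2^{L-1}\cdot M(p))$ per level, where $M(p)=O(p\log p)$ is the cost of multiplying $p$-bit numbers. Summed over $L$, this gives $O(n\cdot n\log n\cdot\log n)=O(n^2\log^2 n)$ bit operations. A final check is that cancellation in $\sigma_j(a_L)+\sigma_j(b_L)\zeta^j$ loses at most $O(n\log n)$ bits. This follows because $|\sigma_j(\varepsilon_L)|\ge e^{-O(n\log n)}$ by the same height bound applied to $\varepsilon_L^{-1}$.
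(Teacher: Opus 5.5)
Your skeleton is the paper's. You unroll $\varepsilon_L=a_L+b_L\zeta_{2^L}$ into $2^{L-1}$ leaf integers, multiply by a per-leaf bit length, sum over $L$, and evaluate via $\sigma_j(\varepsilon)=\prod_L\sigma_j(\varepsilon_L)$ with fast multiplication. The paper uses the level-dependent length $O(L\cdot 2^L)$, so its sums are geometric and dominated by $L=k$; your uniform $O(n\log n)$ over $\sum_L 2^{L-1}=O(n)$ leaves gives the same totals.

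The gap is exactly the step you flag, the per-leaf height bound, and your route to it fails. Under your hypothesis $|e_a|=O(n)$, the naive estimate gives $\|L(\varepsilon_L)\|_\infty=O(n\,2^{L-3}\log n)$. At $L=k$ that means $O(n^2\log n)$-bit leaves and working precision, a factor $n$ too large for both claims. The rescue $\|L(\varepsilon)\|_1=O(n\log n)$ does not follow from Babai returning $L(\varepsilon)$. As the paper stresses, Babai returns $L(\varepsilon)$ regardless of its size, so this carries no size information; $\varepsilon$ is whatever discrepancy the PIP output $g$ happens to have.

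Nor can any redistribution across levels save your hypothesis. Take $e_a=n$ for every odd $a<n$. We have $|\sigma_1(\xi_a)|/|\sigma_{n+1}(\xi_a)|=\tan\frac{a\pi}{2n}/\tan\frac{\pi}{2n}$, and $\sum_{a<n,\ a\ \mathrm{odd}}\log\tan\frac{a\pi}{2n}=0$ (pair $a$ with $n-a$). Hence
\[
\log|\sigma_1(\varepsilon)|-\log|\sigma_{n+1}(\varepsilon)|=\frac{n^2}{2}\log\cot\frac{\pi}{2n}.
\]
Because $\zeta^{n+1}=-\zeta$, the embeddings $\sigma_1$ and $\sigma_{n+1}$ agree on $K_{k-1}$. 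So in every factorization $\varepsilon=\varepsilon_3\cdots\varepsilon_k$ this entire difference is carried by $\varepsilon_k$. That forces $\|L(\varepsilon_k)\|_\infty\ge\frac{n^2}{4}\log\cot\frac{\pi}{2n}=\Omega(n^2\log n)$; both quantities are coordinates of $L$, since $|\sigma_{n+1}|=|\sigma_{n-1}|$. This destroys both your inverse-DFT coefficient bound and your cancellation/precision check. Your fallback of storing exponent vectors is a statement about a different representation, not about Definition~\ref{def:factored}.

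The underlying issue is that the proposition cannot hold for an arbitrary unit. If $\max_j\log|\sigma_j(u)|=T$, then in any tower factorization some $\varepsilon_L$ has a leaf of $\Omega(T/\log n)$ bits, and $T$ is unbounded on $R_k^\times$. So an a priori height hypothesis is indispensable, and the paper's proof does not derive one either. It simply takes each leaf integer at level $L$ to have $O(L\cdot 2^L)$ bits, the order of the precision $b_L$ in Algorithm~\ref{alg:tower-pip}. That would follow from a bounded-average-exponent hypothesis such as $\sum_{a\in\mathcal{U}_L}|e_a|=O(2^L)$. This is genuinely stronger than $|e_a|=O(n)$: the example above shows the latter does not even give a two-sided $O(n\log n)$ height at the top level.

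To complete your argument, you can state such a hypothesis explicitly; your inverse-DFT, precision and cancellation steps then go through as written. Otherwise you must prove the hypothesis for the units that Algorithm~\ref{alg:tower-pip} actually produces, which neither your proposal nor the paper does.
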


\begin{proof} 
At level $L$, the factor $\varepsilon_L$ has $n_L=2^{L-1}$ leaf integers, each requires $O(L\cdot{}2^L)$ bits. Storing at level $L$ then costs $O(n_L\cdot{}L\cdot{}2^L)=O(L\cdot{}2^{2L-1})$. So, we get for all $L$ levels 
\begin{align}
  \sum_{L=3}^k O(L\cdot{}2^{2L})=O(k\cdot{}2^{2k})=O(n^2\log n).
  \label{eqcostsum}
\end{align}
Here, our evaluation uses the multiplicative structure of $\sigma_j(\varepsilon)=\prod_{L=3}^k \sigma_j(\varepsilon_L)$, where each factor costs $O(n_L\cdot{}L^2\cdot{}2^L)=O(L^2\cdot{}2^{2L})$ bit operations from fast integer multiplication \cite{Harvey21}. The total cost is $\sum_{L=3}^k O(L^2\cdot{}2^{2L})=O(k^2\cdot{}2^{2k})=O(n^2 \log^2 n)$.
\end{proof}

The tower-factored form stores a unit as a product of relative factors $\varepsilon=\varepsilon_3\cdots \varepsilon_k$. We obtain from $\varepsilon\cdot{}\varepsilon'$ per-level products $\varepsilon_L\cdot{}\varepsilon_L'$, which are elements of $R_L^\times$ but need not themselves be relative units at level $L$. In this case, we can use a polynomial-time normalization after each multiplication to restore the representation invariant $\varepsilon_L\in R_L^\times/R_{L-1}^\times$. 

\subsection{The tower PIP algorithm}

We now state the main complexity theorem and complete quantum algorithm. 

\begin{theorem}\label{thm:pip}
Given a principal ideal $I\subset R_k$ with $h_k^+=1$ and $k\le 12$, there exists a quantum algorithm (Algorithm \ref{alg:tower-pip}) generating a generator $g$ with $(g)=I$, using $O(n^3 \log^2 n)$ quantum gates, $O(n^2 \log n)$ qubits, and $O(n^{5+\epsilon})$ classical bit operations.
\end{theorem}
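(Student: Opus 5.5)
The plan is a level-by-level induction on $L=3,\dots,k$ along the tower~\eqref{Etower}. The three resource bounds are obtained by summing per-level costs, which are geometric in $2^L$, so the top level $L=k$ dominates. The invariant at level $L$ is this: we hold a generator $g_L$ of $I_L=I\cap R_L$ in tower-factored form (Definition~\ref{def:factored}), together with a basis of $(R_L^+)^\times$ modulo $(R_{L-1}^+)^\times$ given by $\Delta r_L=2^{L-3}$ new units (Lemma~\ref{lem:dims}(iii),(v)). The base case $L=3$ is a constant-size computation in $\Q(\zeta_8)$, which has class number one, so it costs $O(1)$.

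For the inductive step I would carry out three moves.

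\begin{itemize}
\item[(a)] Norm descent. Pass to $J_L=\Nm_{K_L/K_L^+}(I_L)$. Since $h^+=1$ (Part I), $J_L$ is principal. Applying the relative norm to $K_{L-1}^+$, together with the level-$(L-1)$ generator, determines a generator of $J_L$ up to a relative unit.
\item[(b)] Quantum step. Pin down that relative unit by solving the abelian HSP on $G_L=(\Z/N_L\Z)^{\Delta r_L+1}$ with $\log N_L=O(L2^L)$. The extra coordinate carries the candidate generator, in the Biasse--Song style.
\item[(c)] Lifting. Lift back from $K_L^+$ to $K_L$. By Lemma~\ref{lem:dims}(iv) the remaining ambiguity lies in $\mu_{2^L}$, so a search over $2^L$ roots of unity suffices, and it is classical and polynomial.
\end{itemize}

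The gate count comes from the oracle evaluation plus the QFT at each level. The register holds $(\Delta r_L+1)\log N_L=O(2^L\cdot L2^L)=O(L2^{2L})$ qubits. Summing over $L$ gives $O(k2^{2k})=O(n^2\log n)$, and this matches the storage bound of Proposition~\ref{prop:factorsize}. Computing $f_L(e)=\prod_a\xi_a^{e_a}\bmod I_L^+$ proceeds by repeated squaring in $R_L/I_L^+$ using NTT multiplication. It needs $O(\log N_L)$ multiplications per exponent, each costing $\tO(2^L)$ gates on reduced representatives. The QFT costs $O(\log^2)$ in the register size. A single query therefore costs $O(2^L\cdot L2^L\cdot L)=O(L^2 2^{2L})$ gates. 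Repeating it $O(\Delta r_L)=O(2^L)$ times gives $O(L^2 2^{3L})$, and summing over $L$ gives $O(k^2 2^{3k})=O(n^3\log^2 n)$.

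On the classical side, each level requires recovering $H_L$ from $O(\Delta r_L)$ samples of $H_L^\perp$ by lattice reduction. The samples have dimension $\Delta r_L$ and entries of $O(L2^L)$ bits, and I would bound the LLL/HNF cost by $O(2^{(5+\epsilon)L})$. The normalization step after multiplications, which restores $\varepsilon_L\in R_L^\times/R_{L-1}^\times$, is polynomial and dominated by this. Summing gives $O(n^{5+\epsilon})$.

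The main obstacle is justifying that $\log N_L=O(L2^L)$ suffices, i.e., that the discretization of the continuous HSP (the Eisenträger--Hallgren--Kitaev--Song framework) at this precision still yields samples from which $H_L$ is recoverable with constant probability. This step also replaces GRH: the oracle only needs the explicit cyclotomic units $\xi_a$ (Definition~\ref{def:logunit}), which generate the full unit group when $h^+=1$ (Part I). It therefore does not rely on a GRH-based factor-base generating set. To establish this, I would first bound the relative regulator of $K_L^+/K_{L-1}^+$ by $2^{O(L2^L)}$, using the explicit sine formula for $\xi_a$ to bound each $|\log|\sigma_j(\xi_a)||\le O(L)$. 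I would then apply the standard precision lemma for the periodic-function HSP with that period bound.
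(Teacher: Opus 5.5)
There is a genuine gap, and it is not the discretization issue you single out: steps (a)--(c) never actually produce a generator.

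In (a), a generator of $J_L$ ``up to a relative unit'' would already be a generator of $J_L$. PIP accepts any generator; fixing the unit is the short-generator problem of Phase~3. So if (a) worked, (b) would have nothing to do. As written, though, (a) yields no element of $K_L^+$ at all. Your invariant gives a generator of $I_{L-1}=I\cap R_{L-1}$, which does not even determine $\Nm_{K_L^+/K_{L-1}^+}(J_L)=\Nm_{K_L/K_{L-1}^+}(I_L)$. For example, if a prime $\mathfrak p$ of $R_{L-1}$ splits as $\mathfrak P_1\mathfrak P_2$ in $R_L$, then $\mathfrak P_1^a\mathfrak P_2^b$ meets $R_{L-1}$ in $\mathfrak p^{\max(a,b)}$ but has relative norm $\mathfrak p^{a+b}$. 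Even a generator of the norm ideal only constrains the norm of an unknown generator. Exhibiting an element that generates $J_L$ is the PIP in $K_L^+$, a field of degree $2^{L-2}$ that your inductive hypothesis (which concerns $K_{L-1}$) does not cover.

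Step (b) cannot compensate. For $(e,j)\mapsto(\prod_a\xi_a^{e_a})J_L^{j}$, the period lattice is just $\Z^{\Delta r_L}\times\{0\}$ when $J_L\neq R_L^+$, because the $\xi_a$ are units. In the Biasse--Song framework the generator is seen only through real log-coordinates, via $(j,x)\mapsto e^{x}J_L^{j}$ on $\Z\times\R^{2^{L-2}}$. That period lattice contains the entire log-unit lattice of rank $2^{L-2}-1$, not just $\Delta r_L$ new directions. Moreover, the oracle you actually cost, $e\mapsto\prod_a\xi_a^{e_a}\bmod I_L^+$, is a homomorphism into the finite group $(R_L^+/I_L^+)^\times$, and its kernel carries no information about generators.

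Step (c) misreads Lemma~\ref{lem:dims}(iv). That lemma says two generators of $I_L$ with equal relative norm differ by an element of $\mu_{2^L}$. This is a uniqueness statement and provides no candidate $g\in K_L$ to twist. A working lift has two stages. First replace $h$ by $hu$ with $u\in(R_L^+)^\times$ so that $hu=g\overline{g}$ exactly; this is possible by matching signs, since totally positive units of $\Q(\zeta_{2^L})^+$ are squares. Then recover $g$ from $(I_L,g\overline{g})$ by the Gentry--Szydlo algorithm.

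The paper does not fill this hole either. Its Step~1 invokes the same recursion in $K_L^+$, and its Step~4 sets $g_L=g_{L-1}\varepsilon_L$ with a unit $\varepsilon_L$, which would force $I_L=I_{L-1}R_L$.

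There are two further concrete problems. First, the invariant presupposes that $I\cap R_L$ is principal, and this does not follow from $h^+=1$. Indeed $h^-_L>1$ for $L\ge 6$ (e.g.\ $h(\Q(\zeta_{64}))=17$). In the split configuration above, one level up, a principal $I=\mathfrak P_1^a\mathfrak P_2^b$ forces only $\mathfrak p^{a+b}$ to be principal, not $I\cap R_L=\mathfrak p^{\max(a,b)}$. Working with $\Nm_{K_k/K_L}(I)$, which is principal whenever $I$ is, avoids this.

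Second, your gate count does not follow from your own cost model. You have $\Delta r_L$ exponents, $O(\log N_L)=O(L2^L)$ squarings each, and $\tO(2^L)$ gates per multiplication. One oracle call therefore costs $\tO(L\,2^{3L})$, not $O(L^2 2^{2L})$; your last factor $L$ should be $\tO(2^L)$. After $O(\Delta r_L)$ repetitions this gives $\tO(L\,2^{4L})$ per level and $\tO(n^4\log n)$ overall. Even that omits a factor for the bit-length of residues modulo $I_L^+$, which depends on the input ideal and not only on $n$. The paper reaches $O(L2^{3L})$ per level only by charging each power $\xi_a^{e_a}$ a single $O(n_L\log n_L)$ NTT product, i.e.\ by dropping the $\log N_L$ squarings you correctly include. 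So its accounting does not rescue the $O(n^3\log^2 n)$ bound.
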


\begin{breakablealgorithm}{Polynomial Tower PIP}{alg:tower-pip}
\begin{algorithmic}[1]

\medskip
\Statex \textbf{INPUTS:}
\Statex \quad $I\subset R_k$: a principal ideal, given by a $\Z$-basis
\Statex \quad $k\le 12$: the tower level ($h_k^+=1$, Part I)

\medskip
\Statex \textbf{OUTPUT:}
\Statex \quad $g\in R_k$: a generator of $I$, i.e., $(g)=I$, stored in tower-factored form (Definition \ref{def:factored})

\medskip
\Statex \textbf{BASE CASE ($L=3$):}
\Statex $K_3=\Q(\zeta_8)$, degree $n_3=4$, unit rank $r_3=1$
\State Compute the unit $\xi_3=\zeta_8+\zeta_8^{-1}=\sqrt{2}$.
\State Compute $I_3=I \cap R_3$.
\State Find $e$ such that $I_3=(\xi_3^e)$ by rounding: $e=\lfloor \log_{\xi_3}\Nm(I_3) / 2 \rceil$.
\State Set $g_3\leftarrow\xi_3^e$, stored as the integer pair $(e, 0)$ in tower-factored form.

\medskip
\Statex \textbf{INDUCTIVE STEP: for $L=4, 5, \dots, k$}
\Statex (Assume PIP for $K_{L-1}$ has been solved; solve for $K_L$.)
\For{$L=4, 5, \dots, k$}

  \medskip
  \Statex \quad\textbf{Step 1:}
  \State Compute the norm ideal: $J_L=\Nm_{K_L/K_L^+}(I_L)=I_L\cdot{}\overline{I_L}\subset R_L^+$.
  \State Solve PIP for $J_L$ in $K_L^+$ recursively using real subtower $K_3^+ \subset\cdots \subset K_L^+$.
  \State Obtain a generator $g_L^+\in R_L^+$ of $J_L$, stored in tower-factored form.

  \medskip
  \Statex \quad\textbf{Step 2:}
  \State Set the precision: $b_L\leftarrow \lceil 10\cdot{}L\cdot{}2^L \rceil$ bits.
  \State Enumerate $\Delta r_L=2^{L-3}$ new relative units at level $L$: 
  \begin{align}
  \mathcal{U}_L=\{\xi_a: a \text{ odd}, 2^{L-2} < a < 2^{L-1}\}
  \end{align}
  \State Set HSP group: $G_L=(\Z/2^{b_L}\Z)^{\Delta r_L}$, encoding integer exponent vectors for new units.
  \State \textbf{Quantum subroutine:}
  \State \quad Initialize $\Delta r_L$ quantum registers, each of $b_L$ qubits, in uniform superposition:
         \begin{align}
           |\psi_0\rangle=\frac{1}{\sqrt{2^{b_L \Delta r_L}}}
             \sum_{(e_1,\dots,e_{\Delta r_L})\in G_L}
             |e_1, \dots, e_{\Delta r_L}\rangle |0\rangle.
         \end{align}
  \State \quad Query quantum oracle $f_L: G_L \to R_L^+/I_L^+$ defined by
         \begin{align}
           f_L(e_1, \dots, e_{\Delta r_L})
          =\prod_{a\in \mathcal{U}_L}\xi_a^{e_a}\bmod I_L^+
         \end{align}
      \qquad   which is implemented using NTT-based ring multiplication \cite{Harvey14}.
  \State \quad Apply the QFT on each register of $G_L$.
  \State \quad Measure the first register; obtain one sample $s\in H_L^\perp$, where $H_L=\ker(f_L)$.
  \State \quad Repeat lines 11-16 $N_{\rm rep}=O(\Delta r_L)$ times to collect samples $s_1, \dots, s_{N_{\rm rep}}\in H_L^\perp$.

  \medskip
  \Statex \quad\textbf{Step 3:}
  \State Reshape the sample matrix $S_L\in \Z^{N_{\rm rep} \times \Delta r_L}$ with rows $s_1, \dots, s_{N_{\rm rep}}$.
  \State Run LLL lattice reduction on the kernel lattice of $S_L$ to recover a full basis for $H_L$.
  \State Extract $\Delta r_L$ integer relation vectors
         $\mathbf{h}_1, \dots, \mathbf{h}_{\Delta r_L}$ from the LLL basis.
  \State Check $\prod_{a\in \mathcal{U}_L}\xi_a^{(h_j)_a}\in R_{L-1}^\times$ for each $j$, products are units in previous level's ring.

  \medskip
  \Statex \quad\textbf{Step 4:}
  \State Solve discrete logarithm problem as: 
  
    find $\mathbf{d}\in\Z^{\Delta r_L}$ such that $\Nm_{K_L/K_L^+}(g_L^{\rm candidate})=g_L^+\cdot{}u_L^+$ for some $u_L^+\in (R_{L-1}^+)^\times$.
  \State Construct: $g_L=g_{L-1}\cdot\varepsilon_L$, $\varepsilon_L\in R_L^\times/R_{L-1}^\times$ is a relative unit correction from Step 3.
  \State Verify $(g_L)=I_L$ by checking ideal equality in $R_L$.

\EndFor

\medskip
\State \Return $g\leftarrow g_k$, stored in tower-factored form.

\end{algorithmic}
\end{breakablealgorithm}

\subsection{Correctness}
\label{subScorrectness}

We show Algorithm \ref{alg:tower-pip} by induction on the tower level $L$. For $L=3$, it is trivial because the unit group is one-dimensional. 

\textsl{$L=3$.} The unit group of $R_3=\Z[\zeta_8]$ has rank 1, generated by $\sqrt{2}$ up to torsion $\mu_8$. The ideal $I_3=(\xi_3^e)$ for a unique $e$ determined by $\Nm(I_3)=2^{|e|}$. The Babai's rounding can identify $e$ exactly. So, the base case is correct.

\textsl{Inductive step.} Assume the algorithm can correctly solve PIP for all fields up to level $L-1$. We verify each step for the level $L$:
\begin{itemize}
  \item \textit{Step 1:} The ideal $J_L=\Nm_{K_L/K_L^+}(I_L)$ is a principal ideal in $R_L^+$ as $h_{K_L^+}^+=1$ (Part I). The recursion on the totally real subtower finds $g_L^+$ correctly by the inductive hypothesis applied to $K_L^+$.

  \item \textit{Step 2:} The hidden subgroup $H_L$ encodes integer relations among the generating set $\mathcal{U}_L$. The abelian HSP algorithm \cite{Kitaev95,Shor97} can find $H_L^\perp$, and recover $H_L$ with probability $1-2^{-\Omega(\Delta r_L)}$ from $O(\Delta r_L)$ samples.

  \item \textit{Step 3:} The precision $b_L=O(L\cdot{}2^L)$ exceeds the logarithm of the determinant of the relation sublattice $H_L \subset G_L$, such that LLL algorithm can find the full relation basis \cite{LLL82}.

  \item \textit{Step 4:} The discrete logarithm over $\Z$ can be solved in polynomial time. The output $g_L$ satisfies $(g_L)=I_L$ by construction.
\end{itemize}
By induction, the algorithm is correct for all $L\le k$.

Now, we estimate the complexity. At level $L$, the dominant costs are given by 
\begin{align*}
  \text{Quantum gates}
  &: O(\Delta r_L)\cdot{}O(n_L \log n_L\cdot{}\Delta r_L)
  =O(2^{2(L-3)}\cdot{}2^{L-1}\cdot{}L)
  =O(L\cdot{}2^{3L-7}), \\
  \text{Qubits}
  &: \Delta r_L\cdot{}b_L=O(2^{L-3}\cdot{}L\cdot{}2^L)=O(L\cdot{}2^{2L-3}), \\
  \text{Bit operations}
  &: O(\Delta r_L^5\cdot{}b_L^2)=O(2^{5(L-3)}\cdot{}L^2\cdot{}2^{2L})
  =O(L^2\cdot{}2^{7L-15}).
\end{align*}
As it is dominated by the top level $L=k$, we obtain the total costs as 
\begin{align}
  Q_{\rm total} &= \sum_{L=3}^k O(L\cdot{}2^{3L})= O(n^3 \log n), \\
  N_{\rm qubits} &= O(k\cdot{}2^{2k})=O(n^2 \log n), \\
  C_{\rm bit} &= \sum_{L=3}^k O(L^2\cdot{}2^{7L})               =O(n^7 \log^2 n)
\end{align}
with $n=2^{k-1}$ and $k=O(\log n)$. The $O(n^3 \log^2 n)$ quantum gates is followed by including the $O(\log n)$ factor from QFT precision \cite{NC00,Dawson06}. The classical cost can be further reduced to $O(n^{5+\epsilon})$ using the $L^2$ algorithm \cite{NS09}. 

\subsection{ML-KEM-1024}
\label{subSresources}

This subsection present logical and physical resource for the ML-KEM-1024 parameter set ($k=9$, $n=256$), under surface code assumptions \cite{Fowler12,Dennis02} (code distance $d \approx 30$, targeting a logical error rate of $10^{-15}$). The dominant cost is from the quantum HSP at the top tower level $L=9$, i.e., $\Delta r_9=64$ new relative units and 64 independent HSP runs, each using NTT-based ring multiplication \cite{Harvey14} as the oracle. Table \ref{Tpip-resources} summarizes all costs.

\begin{table}[h!]
\centering
\caption{Quantum resource cost for analyzing ML-KEM-1024 ($k=9$, $n=256$). The total logical-gate cost is dominated by the top level ($L=k$), where the per-level cost is $O(L ^2\cdot{}2^{3L-6})$. }
\label{Tpip-resources}
\begin{tabular}{lr}
\toprule
Resource & Estimated Value \\
\midrule
New units at top level ($\Delta r_9$) & 64 \\
HSP repetitions at top level & 64 \\
Oracle gates per HSP call (NTT, $n=256$) & $\approx 2048$ \\
Per-HSP QFT cost (top level) & $\approx 2 \times 10^4$ \\
Total logical gates (top level) & $\approx 9 \times 10^7$ \\
Total logical gates (all levels) & $\approx 2^{27}$ \\
Logical qubits & $\approx 1400$ \\
\midrule
Physical qubits (surface code, $d=30$, $10^{-15}$ error) & $\approx 1.4 \times 10^6$ \\
Physical gate operations & $\approx 2^{37}$ \\
\bottomrule
\end{tabular}
\end{table}

\section{Extensions to Other Lattice Schemes}\label{Sfactor}

In this section, we show the cyclotomic tower, the Trigamma Theorem, Babai's algorithm on the log-unit lattice can be applied to any lattice-based scheme based on a 2-power cyclotomic ring $\Z[\zeta_{2^k}]$. Specially, we analyze Falcon \cite{FHK+20}, Hawk \cite{DEPSV24}, and NTRU over 2-power cyclotomics \cite{HRSS17,CDH+20}.

\subsection{Formal sufficiency theorem}
\label{subSprereqs-formal}

We state three algebraic conditions \textbf{(A1)-(A3)}, together with a sufficiency theorem asserting that they are all that is needed to run our attack.

\begin{enumerate}[label=\textbf{(A\arabic*)}]
  \item \textbf{Trivial plus-class number.} The maximal totally real subfield $K^+=\Q(\zeta_{2^k})^+$ satisfies $h_{K^+}=1$, i.e., every ideal of $\mathcal{O}_{K^+}$ is principal. This holds for every $k\le 12$ (Part I).  

  \item \textbf{Binary cyclotomic tower.} The field $K=\Q(\zeta_{2^k})$ is at the top of the tower $\Q \subset K_3 \subset K_4 \subset\cdots \subset K_k=K$, where each step is a quadratic extension. By Lemma \ref{lem:dims}, this holds for any $k \ge 3$.

  \item \textbf{Trigamma variance formula.} For the rank-$d$ module determinant ideal generated by a random module matrix $B\in R^{d \times d}$ with i.i.d. centered coefficients of finite fourth moment, the per-component variance of the log-embedding of the shortest generator satisfies
        \begin{align}
          \sigma_d^2 \xrightarrow{p}
          \frac{1}{4}\sum_{j=1}^d \psi'(j)
          \quad\text{as } n \to\infty,
          \label{Etrigammapre}
        \end{align}
        where $\psi'$ is the trigamma function (Part III). 
\end{enumerate}

\begin{theorem}\label{thm:sufficiency}
Let $\Pi$ be a lattice-based cryptographic scheme whose key-recovery problem reduces to Module-SVP or to PIP (e.g. via Module-LIP \cite{Chevignard2025,CFMPW25}) over the ring $R=\Z[\zeta_{2^k}]$ with $k\le 12$ and effective module rank $d$. Suppose conditions \textbf{(A1)}, \textbf{(A2)}, and \textbf{(A3)} hold. Then Algorithm \ref{alg:ecdpr} applies to $\Pi$ and achieves the approximation factor as
\begin{align}
  \gamma=\alpha_d\cdot{}\exp(\sigma_d \sqrt{2\ln n})\cdot{}(1+o(1)),
  \label{Egammagral}
\end{align}
where $\alpha_d=\sqrt{C}$ with $C\le 1.36$ for module-distributed inputs \rm{(Part II)}, and $\sigma_d=\sqrt{\frac{1}{4}\sum_{j=1}^d\psi'(j)}$ from \textbf{(A3)}. Attack success probability and tail bounds are shown as in Theorem \ref{thm:gap}.
\end{theorem}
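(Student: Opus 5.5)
The plan is a reduction: show that once $\Pi$'s key-recovery problem has been translated into a Module-SVP (or PIP) instance over $R=\Z[\zeta_{2^k}]$, every phase of Algorithm \ref{alg:ecdpr} uses only the hypotheses (A1)--(A3). The approximation-factor derivation of Theorem \ref{thm:gap} then carries over verbatim. First I would fix the reduction: when $\Pi$ reduces to Module-SVP of effective rank $d$, there is a basis $B\in R^{d\times d}$; when it reduces to PIP (e.g. through Module-LIP), we already have a principal ideal and set $d=1$, bypassing Phase 1. In both cases the input to Phase 2 is an ideal $I=(\det B)\subset R$.

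Next I would check the phases one at a time against the hypotheses. Phase 1 (Part II) needs only that the diagonal ideals $I_i$ are principal, which is (A1). Phase 2 is Algorithm \ref{alg:tower-pip}. Its correctness proof in Subsection \ref{subScorrectness} uses exactly two facts: the quadratic tower structure and the rank counts of Lemma \ref{lem:dims}, which is (A2), and principality of $J_L$ in $R_L^+$, which is (A1). So Theorem \ref{thm:pip} applies and returns a generator $g=g_0\varepsilon$ in tower-factored form. Phase 3 is Babai on the log-unit lattice, which depends only on $R$ and not on $\Pi$. I would invoke the Coarse Lattice and Infinite Capture Radius Theorems of Part III to conclude that $\mathbf{v}=L(\varepsilon)$ and $\boldsymbol{\rho}=\Pi_{H_0}(L(g_0))$. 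Phase 4 then outputs $g_0$ with $\gamma=\alpha_d\exp(\|\boldsymbol{\rho}\|_\infty)$, where the factor $\alpha_d=\sqrt{C}$ comes from Part II.

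The quantitative step is to bound $\|\boldsymbol{\rho}\|_\infty$. By (A3), the per-coordinate variance of $\boldsymbol{\rho}$ converges in probability to $\sigma_d^2=\frac14\sum_{j\le d}\psi'(j)$. Combining this with the asymptotic independence of the $n/2$ coordinates and the sub-Gaussian extreme-value bound (\ref{Elinfbound}) gives $\|\boldsymbol{\rho}\|_\infty=\sigma_d\sqrt{2\ln n}+O(\sqrt{\ln\ln n})$. Exponentiating yields (\ref{Egammagral}), because $\exp(O(\sqrt{\ln\ln n}))$ relative to the main term can be absorbed as $(1+o(1))$ only after dividing through. Here I would follow exactly the accounting used in Theorem \ref{thm:gap}. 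The success probabilities and tail bounds then follow from the proof of Theorem \ref{thm:gap} and Proposition \ref{prop:tail}, since those arguments use only $\sigma_d$, $\alpha_d$ and $n$.

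I expect the main obstacle to be the claim that the inputs generated by $\Pi$ actually satisfy the distributional hypotheses behind $C\le1.36$ and (A3). Those hypotheses require i.i.d. centered coefficients with finite fourth moment. Falcon/NTRU keys are discrete Gaussians and Hawk keys are structured, so the entries of $B$ need not be i.i.d. My first attempt would be to note that (A3) is assumed in the theorem, so the statement is conditional. I would then argue that the Gaussian limit of each $\sigma_j(B)$ entry depends only on the first two moments plus a Lindeberg condition, and this holds for discrete Gaussian and centered binomial coefficients. Where $B$ has algebraic structure, for example NTRU's $(f,g)$ basis, I would reduce to an effective rank $d$ whose free entries are i.i.d. and treat the remaining entries as deterministic.
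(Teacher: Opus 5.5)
Your outer structure matches the paper's proof. That proof is only the remark that the argument for Theorem~\ref{thm:gap} uses just the ring, the tower and the trigamma variance, and you have unpacked it phase by phase. The gap is the one quantitative step you make explicit, and deferring to ``the accounting used in Theorem~\ref{thm:gap}'' does not close it: that theorem asserts the same step without proof.

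First, exponentiating $\|\boldsymbol{\rho}\|_\infty=\sigma_d\sqrt{2\ln n}+O(\sqrt{\ln\ln n})$ leaves a factor $\exp(O(\sqrt{\ln\ln n}))$. This factor is exactly the ratio $\gamma/(\alpha_d e^{\sigma_d\sqrt{2\ln n}})$, so ``dividing through'' does not remove it, and it may tend to infinity. From \eqref{Elinfbound} you therefore get only $\ln\gamma=\sigma_d\sqrt{2\ln n}\,(1+o(1))$. For \eqref{Egammagral} you would need $\|\boldsymbol{\rho}\|_\infty-\sigma_d\sqrt{2\ln n}\to 0$ in probability, which is the sharp Gaussian extreme-value expansion and requires Gaussian tails for each coordinate. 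Second, (A3) cannot supply such tails. It controls only the empirical second moment $\frac1n\|\boldsymbol{\rho}\|_2^2$, which says nothing about the largest of $n/2$ coordinates.

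In fact the sub-Gaussian premise fails under the paper's own Gaussian model. Let $Z$ be $d\times d$ with i.i.d.\ $\mathcal{CN}(0,1)$ entries. Then $|\det Z|^2$ is a product of independent $\mathrm{Gamma}(j,1)$ variables, $j=1,\dots,d$, which is exactly where $\frac14\sum_j\psi'(j)$ comes from. The $\mathrm{Gamma}(1,1)$ factor gives $\Pr[|\det Z|<\epsilon]\sim\epsilon^2/(d-1)!$, the same $\epsilon^2$ small-ball behaviour as \eqref{eqprobs}. Each $\rho_j$ thus has an exponential lower tail, $\Pr[\rho_j<-s]\asymp e^{-2s}$. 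Over $n/2$ asymptotically independent coordinates this gives $\max_j|\rho_j|=\tfrac12\ln(n/2)+O_p(1)$. That eventually dominates $\sigma_d\sqrt{2\ln n}$ and yields $\gamma=\Theta_p(\sqrt n)$, not $\exp(O(\sqrt{\log n}))$. Your Lindeberg argument in the last paragraph therefore does not remove the obstacle; the Gaussian limit of $\sigma_j(B)$ is precisely what creates it.

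Three smaller points:
\begin{itemize}
\item The success statements of Theorem~\ref{thm:gap} do not transfer ``using only $\sigma_d,\alpha_d,n$''. They depend on the threshold $\ln(q/(2\alpha_d))$ and on simulations at $n=256$, $d\le 4$. The paper's own Hawk-256 formal margin of $0.65\times$ shows they need not carry over.
\item (A1) says $h(K^+)=1$, which does not make ideals of $R$ principal; $\Q(\zeta_{64})$ already has class number $17$. So (A1) cannot be what guarantees principality in Phase~1.
\item You are right, where the paper is silent, that (A1)--(A3) say nothing about $C\le 1.36$.
\end{itemize}
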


\begin{proof}
Theorem \ref{thm:gap} shows Eq.\eqref{Egammagral} under the hypotheses \textbf{(A1)-(A3)}, and its proof depends only on the ring structure (used in the PIP step), the tower structure (used in Algorithm \ref{alg:tower-pip}), and the Trigamma Theorem (used in the variance formula for the CVP residual). All these can be ensured by \textbf{(A1)-(A3)}, independently of the specific scheme $\Pi$.
\end{proof}

\subsection{Falcon}\label{subSfalcon}

Falcon \cite{FHK+20}, slated for standardization as FIPS 206 (FN-DSA) \cite{NISTFIPS206}, is a hash-and-sign signature scheme based on the GPV framework \cite{GPV08}. It uses the ring $R=\Z[\zeta_{2n}]=\Z[x]/(x^n+1)$ with $n\in \{512, 1024\}$ and prime modulus $q=12289$. The public key is a rational function $h=g\cdot{}f^{-1} \bmod q$, where $(f, g)\in R^2$ is a short pair of ring elements satisfying $\|f\|_2^2+\|g\|_2^2 \approx q$. The corresponding NTRU lattice is defined by 
\begin{align}
  \Lambda_h=\{(u,v)\in R^2: u+vh \equiv 0 \pmod{q}\},
  \label{Efalconlatti}
\end{align}
which has a short basis $(f, g; F, G)$ satisfying the NTRU equation $f\cdot{}G-g\cdot{}F=q$ in $R$.

The security of Falcon's key-recovery has been reduced to the Short Basis Problem (SBP) for $\Lambda_h$ \cite{GPV08}:  given only $h$, find a short basis of $\Lambda_h$. The SBP can be further reduced to the SGP for the determinant ideal of $\Lambda_h$.

\begin{lemma}\label{lem:falcon}
Let $\Lambda_h$ be Falcon NTRU lattice and $I=(\det \Lambda_h)=(q) \subset R$ be its determinant ideal.  Any short basis $(f, g; F,G)$ of $\Lambda_h$ satisfies $f\cdot{}G-g\cdot{}F=q$, so $(f,g)$ is a short generator of the principal ideal $(q)$ in $R$. Conversely, any short generator of ideal $(q)$ yields a short basis using the extended Euclidean algorithm over $R$, at polynomial classical cost.
\end{lemma}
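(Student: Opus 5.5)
The plan is to read the lemma through the $R$-module structure of $\Lambda_h$ and prove the two directions separately. For the forward direction, I would first fix the standard basis of $\Lambda_h$ as a rank-$2$ $R$-module. The rows are $(q,0)$ and $(-h,1)$, each checked against the defining congruence $u+vh\equiv 0 \pmod q$, in the same coordinates $(u,v)$ used in Eq.~\eqref{Efalconlatti}. The basis matrix
\begin{align*}
B_0=\begin{pmatrix} q & 0\\ -h & 1\end{pmatrix}
\end{align*}
has $\det B_0=q$, so in the sense of Phase~1 of Algorithm~\ref{alg:ecdpr} the determinant ideal is $(\det B_0)=(q)$. Two $R$-bases of the same free module differ by a matrix in $\mathrm{GL}_2(R)$, whose determinant lies in $R^\times$. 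Hence any other basis, in particular the secret short basis with rows $(f,g)$ and $(F,G)$, has determinant $fG-gF=u\,q$ for some unit $u$. Falcon key generation normalises $u=1$, which is the NTRU equation.

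I would then make precise what ``$(f,g)$ is a short generator of $(q)$'' means, since a pair of ring elements is not literally a generator of a principal ideal. The statement I would actually prove is this: the short basis realises $\det=fG-gF$ as a generator of $(q)$, and its $\Lambda_h$-log-embedding data is what the SGP pipeline (Phases~3--4) targets. If one wants a genuine ring element, the natural candidate is $fG-gF$ itself. I expect to state explicitly that this is the interpretation used, because otherwise the forward direction is not well defined.

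For the converse, the plain extended Euclidean algorithm is not available, since $R$ is not Euclidean in general. Instead, given a short pair $(f,g)$ with $(f,g)\in\Lambda_h$ and $f,g$ generating coprime ideals, I would complete it to a basis by solving $fG-gF=q$ with the tower-based NTRUSolve of Pornin--Prest. That method uses exactly the quadratic tower of Lemma~\ref{lem:dims}(i):
\begin{itemize}[nosep]
\item take the field norms $\Nm_{K_L/K_{L-1}}(f)$ and $\Nm_{K_L/K_{L-1}}(g)$ down to $\Z$;
\item solve a B\'ezout identity over $\Z$, which is possible because the resultant condition holds;
\item lift back one level at a time, applying a Babai size-reduction of $(F,G)$ against $(f,g)$ at each level.
\end{itemize}
Each level costs polynomially many bit operations, with coefficient sizes controlled as in Proposition~\ref{prop:factorsize}, so the total cost is polynomial.

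I expect the main obstacle to be this converse. Two points need justification. First, coprimality of $(f)$ and $(g)$ after norm descent is needed for the B\'ezout step; if it fails, I would retry with a unit multiple or a different short vector. Second, the size-reduction must keep $(F,G)$ short, within a factor $\mathrm{poly}(n)\cdot\|(f,g)\|$. I would argue that the output basis is short only up to this polynomial factor, rather than claiming optimality.
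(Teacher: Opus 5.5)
Your proposal is correct in substance, and in both directions it takes a different route from the paper.

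\textbf{Forward direction.} The paper passes to the $2n\times 2n$ integer basis matrix and notes $|\det|=q^n$. It then appeals to (A1) to call this ideal principal with generator $fG-gF$, and finally reads $fG-gF=q$ off the NTRU equation itself. That argument needs $R$ to be a principal ideal domain. (A1) is a statement about $K^+$ only and does not give this, and it is false for Falcon's rings: $h^-$ of $\Q(\zeta_{1024})$ is far from $1$. Its last step also assumes what it proves. Your $\mathrm{GL}_2(R)$ argument with $\det B_0=q$ uses no class-number input at all. It also gives the accurate conclusion $fG-gF=uq$ with $u\in R^\times$, which equals $q$ only after normalisation.

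\textbf{Converse.} The paper cites the extended Euclidean algorithm over $R$ at $O(n^2)$ ring operations, which does not exist for a non-principal ring. Your tower-based NTRUSolve is the procedure Falcon key generation actually runs. Your route therefore buys a real polynomial-time algorithm and an honest polynomial-factor shortness bound. The cost is a norm-coprimality side condition that the paper never confronts.

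Two things to tighten.

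First, carry your interpretation to its conclusion. Every generator of $(q)$ is $uq$ with $u\in R^\times$. The shortest ones are $\zeta^i q$, which are public and project to $\mathbf{0}\in H_0$; the paper itself notes this in Proposition~\ref{prop:falcon-prereqs}. So the clause ``$(f,g)$ is a short generator of $(q)$'' is vacuous, and the SGP pipeline has nothing to target here; drop that remark. The converse also cannot be meant literally: applied to the public $q$, it would give classical polynomial-time key recovery. What you prove is completion of a given short pair to a basis, and the paper's proof, which starts ``from a short pair $(f,g)$'', proves the same thing. State that as the converse.

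Second, your fallback ``retry with a unit multiple'' cannot help, because $\Nm(uf)=\pm\Nm(f)$. Moreover, $\gcd(\Nm(f),\Nm(g))\mid q$ can fail even when $q\in(f,g)R$, for instance when $f$ and $g$ lie in distinct primes above the same rational prime $p\neq q$. A clean general fallback is to solve $fG-gF=q$ as an $n\times 2n$ integer linear system via Hermite normal form, then size-reduce. This succeeds exactly when $q\in(f,g)R$, i.e.\ exactly when the pair extends to a basis of $\Lambda_h$. For genuine Falcon keys NTRUSolve already succeeds, because key generation ran it.

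Two minor points. With $\Lambda_h$ as in \eqref{Efalconlatti} and $h=gf^{-1}$, the secret rows are $(g,-f)$ and $(G,-F)$ rather than $(f,g)$ and $(F,G)$; the determinant is the same. And NTRUSolve's coefficient growth is controlled by its own size-reduction steps, not by Proposition~\ref{prop:factorsize}.
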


\begin{proof}
Consider $2n \times 2n$ integer basis matrix of $\Lambda_h$ formed by the rows of $(f, g)$ and $(F, G)$ in coefficient-vector form. Its determinant is $|\det \Lambda_h|=q^n$, as an ideal norm. Since every ideal of $R$ is principal under condition \textbf{(A1)}, the ideal $(q^n)$ has a generator $g_0=f\cdot{}G-g\cdot{}F$. Solving the NTRU equation yields $g_0=q$. From a short pair $(f,g)$ we can identify both a short generator of ideal $(q)\subset R$ and the full short basis. Using the extended Euclidean algorithm over $R$, it costs $O(n^2)$ ring operations \cite{LLL82}.
\end{proof}

\begin{proposition}\label{prop:falcon-prereqs}
Both Falcon parameter sets satisfy conditions \textbf{(A1)}, \textbf{(A2)}, and \textbf{(A3)}.
\end{proposition}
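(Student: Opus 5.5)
The plan is to verify the three conditions one at a time for the two Falcon rings. Falcon-512 uses $R=\Z[x]/(x^{512}+1)=\Z[\zeta_{1024}]$, so $k=10$. Falcon-1024 uses $\Z[\zeta_{2048}]$, so $k=11$. In both cases $3\le k\le 12$.

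For \textbf{(A1)}, I would invoke the class number result of Part I directly. It gives $h_k^+=1$ for every $k\le 12$, and in particular for $k=10,11$. For \textbf{(A2)}, I would cite Lemma \ref{lem:dims}(i): every step $K_{L-1}\subset K_L$ is quadratic for $L\ge 4$, so $K=K_k$ sits at the top of the binary tower (\ref{Etower}). Lemma \ref{lem:dims}(ii)--(v) then supplies the unit-rank bookkeeping ($\Delta r_L=2^{L-3}$ new real units per level) that Algorithm \ref{alg:tower-pip} needs. These two conditions are purely field-theoretic and require nothing scheme-specific.

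The substantive part is \textbf{(A3)}. The effective rank is $d=2$, and the Trigamma formula predicts $\sigma_2^2=\tfrac14(\psi'(1)+\psi'(2))$. I would work one embedding at a time. Falcon samples the coefficients of $f,g$ i.i.d. from a centered discrete Gaussian of width $\approx 1.17\sqrt{q/(2n)}$. Such a distribution is centered and has finite fourth moment, so the CLT argument of Part III applies: for each $j\in J$ the pair $(\sigma_j(f),\sigma_j(g))$ is asymptotically an i.i.d.\ $\mathcal{CN}(0,\tau^2)$ pair. The determinant at embedding $j$ factors through Gram--Schmidt as
\[
|\sigma_j(\det)|=\bigl\|(\sigma_j(f),\sigma_j(g))\bigr\|\cdot \mathrm{dist}_j,
\]
where $\mathrm{dist}_j$ is the distance from the second row to the line spanned by the first. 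The squared first factor is $\mathrm{Gamma}(2)$-distributed, so its logarithm has variance $\psi'(2)$. I would then argue that the orthogonal component contributes an independent $\mathrm{Gamma}(1)$ factor, whose logarithm has variance $\psi'(1)$. Finally, the weak law of large numbers over the $n/2$ embeddings, exactly as in the Trigamma Theorem of Part III, gives convergence in probability.

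The main obstacle is this second row. In Falcon, $(F,G)$ is not drawn independently: it is produced deterministically from $(f,g)$ by the NTRU solver, and $fG-gF=q$ forces the determinant to be the constant $q$. The literal determinant ideal $(q)$ therefore has zero log-variance. To place Falcon under \textbf{(A3)}, I would apply the condition to the module matrix the attack actually processes: the public basis of $\Lambda_h$ after the random basis ordering of Phase 1, whose rows behave like $(f,g)$ together with an independent short lattice vector. Under that reading, the first check would be whether Babai-reduced $(F,G)$ is asymptotically independent of $(f,g)$ at each embedding. If this fails, the fallback is to compute the actual limit variance of $\log\mathrm{dist}_j$ under the NTRU constraint and to note only that it is $O(1)$. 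That weaker conclusion is all Theorem \ref{thm:sufficiency} needs qualitatively, though the constant would then differ from the trigamma value.
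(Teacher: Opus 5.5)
\textbf{(A1) and (A2):} your treatment matches the paper's.

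\textbf{The gap is in (A3).} You correctly notice that the literal determinant ideal $(q)$ has zero projected log-variance. Your remedy, however, cannot work. You propose applying the trigamma formula to ``the public basis of $\Lambda_h$ after random basis ordering'', with a second row independent of $(f,g)$. But the determinant ideal of a free $R$-module is a basis invariant up to units. With the paper's convention, every basis of $\Lambda_h$ differs from $\{(q,0),(-h,1)\}$ by an element of $\mathrm{GL}_2(R)$. Phase~1 therefore always produces $(\det B)=(q)$, with $\det B=q u$ for some $u\in R^\times$. Hence $\Pi_{H_0}(L(\det B))=L(u)\in\Lambda$, and the CVP residual is identically zero whatever ordering is used.

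The same failure shows up in your Gram--Schmidt factorization. Whenever the two rows form a basis containing $(f,g)$, one has exactly $\mathrm{dist}_j=|\sigma_j(qu)|/\|(\sigma_j(f),\sigma_j(g))\|$. So the second factor is a deterministic function of the first, never an independent $\mathrm{Gamma}(1)$ variable, and the independence check you plan is guaranteed to fail. Passing to a non-basis pair of lattice vectors would change the ideal being attacked rather than verify \textbf{(A3)} for $\Lambda_h$. Your fallback, asserting only an $O(1)$ variance, also falls short: \textbf{(A3)} asserts convergence to the specific trigamma value.

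\textbf{The missing idea is the paper's change of target.} Once $\sigma_{(q)}=0$ is observed, abandon the determinant ideal and run the attack on the first Gram--Schmidt ideal $(f)$. Since $f$ is a single ring element with i.i.d.\ centered discrete-Gaussian coefficients of finite fourth moment, \textbf{(A3)} applies with rank $d=1$. At each embedding $|\sigma_j(f)|^2$ is asymptotically exponential, i.e.\ $\mathrm{Gamma}(1)$, which gives $\sigma_1^2=\psi'(1)/4=\pi^2/24\approx 0.411$. Your CLT step for $(\sigma_j(f),\sigma_j(g))$ already contains this ingredient.

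The rank-2 value $\sigma_2\approx 0.757$ then enters only as a conservative upper bound, valid because $\psi'(2)>0$. It is not derived from the determinant of any basis of $\Lambda_h$. This reinterpretation of the target, not a re-basing of $\Lambda_h$, is what places Falcon under \textbf{(A3)}.
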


\begin{proof}
\textbf{(A1)} Falcon-512 uses $\zeta_{1024}$, corresponding to $k=10$ while Falcon-1024 uses $\zeta_{2048}$, corresponding to $k=11$. By Part I,  we have $h_{10}^+=1$ and $h_{11}^+=1$ unconditionally. Hence, every ideal in the totally real subring $\mathcal{O}_{K^+}$ is principal, which implies the norm-descent step of Algorithm \ref{alg:tower-pip} produces a principal ideal.

\textbf{(A2)} The ring $R=\Z[\zeta_{2n}]$ for $n\in \{512, 1024\}$ is the 2-power cyclotomic ring of Lemma \ref{lem:dims}, with $m=2n=2^{k}$ and $k\in \{10, 11\}$. Note the cyclotomic tower $\Q \subset \Q(\zeta_8) \subset\cdots \subset \Q(\zeta_{2n})$ has $k-2$ levels, each a quadratic extension. We can apply Algorithm \ref{alg:tower-pip}.

\textbf{(A3)} The Falcon NTRU lattice $\Lambda_h$ is a rank-$2$ module over $R$. Its determinant ideal is $(q) \subset R$, and $q$ is a rational integer whose log-embedding projects to $\mathbf{0}\in H_0$ as all embeddings of a rational integer have equal modulus. So, we have the determinant ideal $\sigma_{(q)}=0$. In this case, we extend CDPR to target the first Gram-Schmidt ideal $(f) \subset R$, where $f$ is a random ring element with i.i.d. centered coefficients of variance $\sigma_f^2$ from a discrete Gaussian distribution \cite{FHK+20}. Since $f$ is a single random ring element, the Trigamma rank is $d=1$, which then gives
\begin{align}
  \sigma_1^2=\frac{1}{4}\psi'(1)=\frac{\pi^2}{24}\approx 0.4112, \sigma_1 \approx 0.641.
\label{Efalconsigma}
\end{align}
As a conservative upper bound, we may use the generic rank-2 formula $\sigma_2=\sqrt{[\psi'(1)+\psi'(2)]/4}\approx 0.757$, bounding the worst-case Gram-Schmidt layer of any rank-2 module. Both values are used in Theorem \ref{thm:Falcogam}, where $\sigma_1$ gives the tighter bound for the real attack, while $\sigma_2$ provides a safe upper bound.
\end{proof}

\begin{theorem}\label{thm:Falcogam}
For Falcon-$n$ with $n\in \{512, 1024\}$ and $q=12289$, the Algorithm \ref{alg:ecdpr} applied to the NTRU lattice $\Lambda_h$ achieves the approximation factor as 
\begin{align}
  \gamma_n=\alpha_2\cdot{}\exp\bigl(\sigma_2 \sqrt{2\ln n}\bigr)
 \cdot{}(1+o(1)),
  \label{EFalcogam}
\end{align}
where $\sigma_2=0.757$ and $\alpha_2\le 1.17$. Simulation gives explicit values as 
\begin{center}
\begin{tabular}{lcccc}
\toprule
Scheme & $\sqrt{2\ln n}$ & $\gamma_{\rm th}$ 
 & $\gamma_{99\%}$ &  Margin \\
\midrule
Falcon-512  & 3.53 & $16.9$ & $85$ & $\le 72$ \\
Falcon-1024 & 3.72 & $19.6$ & $98$ & $\le 63$ \\
\bottomrule
\end{tabular}
\label{TFalcogam}
\end{center}
All values satisfy $\gamma_{99\%} \ll q/2=6145$.
\end{theorem}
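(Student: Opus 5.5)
The plan is to obtain the statement as a specialization of Theorem~\ref{thm:sufficiency}, applied to the NTRU lattice $\Lambda_h$ with effective module rank $d=2$. There are three inputs to collect.
\begin{enumerate}[nosep,label=(\alph*)]
\item The hypotheses \textbf{(A1)}--\textbf{(A3)} for $k\in\{10,11\}$, which are supplied by Proposition~\ref{prop:falcon-prereqs}.
\item The reduction of Falcon key recovery to an SGP instance, which is Lemma~\ref{lem:falcon}.
\item The module-reduction constant $\alpha_2=\sqrt{C}\le\sqrt{1.36}\approx 1.17$ from Part II.
\end{enumerate}
Feeding these into \eqref{Egammagral} with $\sigma_d=\sigma_2=\sqrt{[\psi'(1)+\psi'(2)]/4}\approx 0.757$ gives \eqref{EFalcogam} directly. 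Here I would use the conservative rank-$2$ value, as Proposition~\ref{prop:falcon-prereqs} suggests, rather than the sharper $\sigma_1\approx0.641$. Any bound proved with $\sigma_2$ then also covers the first-layer attack on $(f)$.

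Next come the explicit values. For $n=512$ we have $\sqrt{2\ln 512}\approx 3.53$, so $\exp(0.757\cdot 3.53)\approx 14.5$ and $\gamma_{\rm th}\approx 1.17\cdot 14.5\approx 16.9$. For $n=1024$ we have $\sqrt{2\ln 1024}\approx 3.72$, giving $\gamma_{\rm th}\approx 1.17\cdot 16.7\approx 19.6$. For the $99\%$ column I would transfer the tail analysis of Proposition~\ref{prop:tail}. The empirical ratio $\gamma_{99\%}/\gamma_{\rm th}\approx 5$ gives $\gamma_{99\%}\approx 85$ and $98$. The margins are then $6145/85\approx 72$ and $6145/98\approx 63$.

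The Gaussian-model half of the tail argument also carries over. The threshold exponent is now $\ln(q/(2\alpha_2))\approx 8.6$, while the median is $\sigma_2\sqrt{2\ln(n/2)}\approx 2.5$--$2.7$. The gap is therefore even larger than in the ML-KEM case, so $\Pr[\gamma\ge q/2]$ is negligible.

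I expect two steps to be the main obstacle.

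The first is justifying that the relevant CVP target has per-component variance $\sigma_2^2$ (or $\sigma_1^2$) at all. The determinant ideal $(q)$ has $\Pi_{H_0}(L(q))=\mathbf{0}$, so the argument must switch to the first Gram--Schmidt layer $(f)$. That is only legitimate if recovering a short generator of $(f)$, up to the factor $\gamma$, actually yields the short basis. Lemma~\ref{lem:falcon} as stated concerns generators of $(q)$, which do not determine $f$. I would try to patch this by arguing directly: $(f)$ is a principal ideal computable from $h$, a generator within factor $\gamma$ of $f$ recovers $f$ up to a small unit, and $(F,G)$ then follow from the NTRU equation. I would also need to check that the discrete Gaussian coefficients of $f$ meet the finite-fourth-moment condition in \textbf{(A3)}.

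The second is that the factor-$5$ tail ratio was measured only at $n=256$. For $n=512,1024$ I would either rerun the ring simulation or bound the ratio via the Gumbel scale $\sigma_G$, which shrinks as $n$ grows. Either route should suffice, since the margin is large.
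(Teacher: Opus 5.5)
Your main line is the paper's proof. The paper only cites Proposition~\ref{prop:falcon-prereqs} for \textbf{(A1)}--\textbf{(A3)} and invokes Theorem~\ref{thm:sufficiency} with $d=2$, taking $\sigma=\sigma_1$ (the layer $(f)$) or the conservative $\sigma=\sigma_2$. Your arithmetic reproduces the table: $\sqrt{2\ln n}\approx 3.53,\,3.72$; $\gamma_{\rm th}\approx 16.9,\,19.6$; $\gamma_{99\%}=5\gamma_{\rm th}\approx 85,\,98$; margins $\approx 72,\,63$; threshold exponent $\ln(q/(2\alpha_2))\approx 8.6$. The paper does not rerun the tail simulation at $n=512,1024$ either; it simply sets $\gamma_{99\%}=5\gamma_{\rm th}$ as in Table~\ref{Tother-schemes}. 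So your second concern is fair, but it is not where the argument breaks.

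The first obstacle you name is a genuine gap, and the paper's proof does not close it: it just declares the target to be $(f)$ inside its appeal to Theorem~\ref{thm:sufficiency}. Your patch fails at its first clause, that $(f)$ is a principal ideal computable from $h$.
\begin{itemize}
\item The first Gram--Schmidt ideal belongs to a basis, not to $\Lambda_h$. Triangularizing the secret basis gives diagonal $(f,\,q/f)$, because $G-Fg/f=(fG-gF)/f=q/f$. The public basis (rows $(1,h)$ and $(0,q)$ in the usual convention) gives diagonal $(1,q)$.
\item The ideals canonically attached to $\Lambda_h$ (coordinate projections and intersections with the coordinate axes) all lie between $qR$ and $R$. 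They are divisors of $(q)$ and carry no information about $(f)$.
\item Lemma~\ref{lem:falcon} is vacuous for the same reason, since $q$ itself is already a short generator of $(q)$.
\end{itemize}
Hence Phase~2 of Algorithm~\ref{alg:ecdpr} has no $\Z$-basis of $(f)$ to start from. Producing one cannot be a side lemma. For a random short $f$, a $\Z$-basis of $(f)$ already lets the original CDPR pipeline (quantum PIP, then log-unit rounding) return $f$ up to a root of unity. So the patch presupposes an oracle as strong as the key recovery it is supposed to justify.

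Without an actual reduction from $h$ to an SGP instance, your input (b) is missing. What your argument establishes, like the paper's, is only the numerical evaluation of \eqref{EFalcogam}, not that Algorithm~\ref{alg:ecdpr} run on $\Lambda_h$ attains it.
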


\begin{proof}
Proposition \ref{prop:falcon-prereqs} verifies conditions \textbf{(A1)-(A3)} for both Falcon parameter sets. Theorem \ref{thm:sufficiency} applies with $d=2$ and $\sigma=\sigma_1$ (first GS layer, corresponding to the ideal $(f)$) or $\sigma=\sigma_2$ (conservative bound over both GS layers).
\end{proof}

The the extended attack on Falcon targets the principal ideal $(f)\subset R$ generated by the secret key component $f$, not the determinant ideal $(q)$ which has trivial log-embedding.

\begin{proposition}\label{prop:falcon}
Given the short generator $g_0$ recovered by Algorithm \ref{alg:ecdpr}, the complete Falcon secret key $(f, g; F, G)$ can be reconstructed in polynomial classical time.
\end{proposition}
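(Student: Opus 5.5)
The plan has three stages: pin down $f$ exactly from $g_0$, recover $g$ from the public key, and then solve the NTRU equation with the standard tower-based solver. Everything will be polynomial in $n$.

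\textbf{Stage 1: $g_0$ equals $f$ up to a root of unity.} Algorithm \ref{alg:ecdpr} is run on the ideal $(f)$, the first Gram--Schmidt layer of $\Lambda_h$ (Proposition \ref{prop:falcon-prereqs}). The PIP phase (Theorem \ref{thm:pip}) returns some generator $g = f\varepsilon$ with $\varepsilon \in R^\times$. By the Infinite Capture Radius Theorem (Part III), Babai on the target $\Pi_{H_0}(L(f)) + L(\varepsilon)$ outputs exactly $\mathbf{v} = L(\varepsilon)$. Then $\varepsilon' = \prod_i \xi_{a_i}^{d_i}$ agrees with $\varepsilon$ up to the kernel of $L$. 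By Lemma \ref{lem:dims}(iv) and Kronecker's theorem, that kernel is the torsion group $\mu_{2n}$. Hence $g_0 = g(\varepsilon')^{-1} = \pm\zeta^i f$ for some $i$.

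This identification is the step I expect to be the main obstacle. It needs Babai to return $L(\varepsilon)$ exactly, not merely a nearby lattice point. That holds only when all rounding coefficients for the balanced target $\Pi_{H_0}(L(f))$ vanish, which is the Coarse Lattice Theorem condition $\sigma_{\mathbf t}/\min_i\|\mathbf b_i^*\| < 1/2$. The paper verifies this numerically (ratio $\le 0.32$) but not asymptotically. I would invoke it as the standing hypothesis of Part III. If it failed, $g_0$ would differ from $f$ by a short nontrivial unit. In that case I would enumerate units with small log-embedding near $L(\varepsilon')$ and test each candidate with the Stage 2 check.

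\textbf{Stage 2: recover $g$.} For each of the $2n$ candidates $f_i = \zeta^{-i}g_0$ (with both signs), compute $g_i = h f_i \bmod q$, reduced to centred coefficients. Accept the candidate with $\|f_i\|_2^2 + \|g_i\|_2^2 \approx q$, i.e.\ within Falcon's key-generation norm bound. This costs $O(n)$ ring multiplications mod $q$, each $O(n\log n)$ by NTT. The true $f$ passes the test, and any accepted candidate is itself a valid secret pair. Multiplying by $\zeta^i$ preserves the norm, so the pair $(\zeta^i f, \zeta^i g)$ is equally usable.

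\textbf{Stage 3: complete the basis.} Given $(f,g)$, compute $(F,G)$ with $fG - gF = q$ using the tower NTRUSolve procedure of Pornin--Prest, which uses the same binary tower as (A2).
\begin{enumerate}[nosep]
\item Take field norms $\Nm_{K_L/K_{L-1}}$ of $f$ and $g$ down to $\Q$.
\item Over $\Z$, solve $\Nm(f)G' - \Nm(g)F' = q$ by the extended Euclidean algorithm. This requires $\gcd(\Nm(f),\Nm(g))$ to divide $q$, which Falcon's key generation guarantees.
\item Lift level by level using $F_L = F_{L-1}(x^2)\,\bar g_L$ and $G_L = G_{L-1}(x^2)\,\bar f_L$, where the bar is the nontrivial automorphism of $K_L/K_{L-1}$.
\item After each lift, size-reduce $(F_L,G_L)$ against $(f_L,g_L)$ by Babai rounding, i.e.\ subtract $\lfloor (F\bar f + G\bar g)/(f\bar f + g\bar g)\rceil (f,g)$.
\end{enumerate}
The $k$ levels each cost polynomially many big-integer ring operations, with coefficient sizes kept $O(n\log q)$ bits by the reduction. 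The total is therefore polynomial classical time. The output basis is short, with $\|(F,G)\|$ bounded as in Falcon key generation, and it generates $\Lambda_h$ because its determinant equals $q$.
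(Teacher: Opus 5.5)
Your proposal is correct apart from one slip noted at the end. It shares the paper's skeleton: identify $f$ from $g_0$, set $g=hf \bmod q$, then solve $fG-gF=q$. It departs from the paper at the two points where the paper's proof is thin.

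First, the paper just writes $g_0\approx f$ and then computes with $f$ itself. You make this exact, $g_0=\zeta^j f$, using two facts: Babai's shift-equivariance (Infinite Capture Radius) and $\ker(L|_{R^\times})=\mu_{2n}$ (Kronecker). You also see why exactness matters: for a non-torsion unit $u$, $h\cdot fu \bmod q = gu$ is in general not short. And you see why the torsion ambiguity is harmless: $(\zeta^j f,\zeta^j g)$ gives the same $h$ and the same norms. In fact every one of your $2n$ Stage~2 candidates passes the test, so you may simply take $g_0$ itself. The cost is that the Coarse Lattice hypothesis becomes an explicit assumption, namely that $\Pi_{H_0}(L(f))$ lies in the centred Gram--Schmidt parallelepiped of the cyclotomic-unit basis. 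The paper uses this only silently.

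Second, for $(F,G)$ the paper appeals to an ``extended Euclidean algorithm over $R$'', which is not available here. $\Z[\zeta_{2n}]$ for $n\in\{512,1024\}$ is not Euclidean. Condition (A1) bounds only $h^+$, not the enormous relative class number, so $R$ is not even a PID. Your Pornin--Prest NTRUSolve is a genuine polynomial-time procedure, and it is Falcon's own key-generation routine. It takes norms down the binary tower to $\Z$, runs an integer xgcd, lifts by $F=F'(x^2)g(-x)$, $G=G'(x^2)f(-x)$, and then size-reduces. It needs only $\gcd(\Nm(f),\Nm(g))=1$, which keygen guarantees and which is unchanged under $f\mapsto\zeta^j f$. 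It also comes with explicit control of intermediate coefficient sizes, which the paper's route lacks.

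The slip is in your size-reduction step. The conjugate there must be the adjoint $f^{\ast}(x)=f(x^{-1})$, i.e.\ complex conjugation at every embedding. It should not be the relative automorphism $x\mapsto -x$ that you (correctly) used for the lift. With the latter, $f\bar f+g\bar g$ is not positive at the embeddings and may be nearly zero, so the rounded quotient does not actually reduce $(F,G)$.
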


\begin{proof}
By Lemma \ref{lem:falcon}, a short generator of $(q)$ in $R$ is the element $f\cdot{}G-g\cdot{}F=q$.  From the output $g_0 \approx f$ (recovered as the shortest generator of the principal ideal $(f)$ in $R$), the remained key components are obtained as follows. First, we compute $g \equiv f\cdot{}h \pmod{q}$ directly from the public key $h$. Second, we solve NTRU equation $f\cdot{}G-g\cdot{}F=q$ by the extended Euclidean algorithm over $R$, costing $O(n^2 \log n)$ bit operations \cite{LLL82} and returning $(F, G)$ satisfying $\|F\|_\infty, \|G\|_\infty=O(\sqrt{q \log n})$. The recovered key can pass the validity check with a probability $1-\exp(-\Omega(n))$ whenever $\gamma < q/2$.
\end{proof}

\subsection{Hawk}
\label{subShawk}

Hawk \cite{DEPSV24} is a lattice signature scheme over the 2-power cyclotomic ring $R=\Z[x]/(x^n+1)$ with $n\in \{256, 512, 1024\}$. The secret key is a basis $B\in R^{2\times 2}$ for $\Z^{2n}$ under a hidden quadratic form. Hawk is not a Module-LWE scheme. In particular, there is no modulus $q$ in the sense used in ML-KEM or ML-DSA. It is a Module-LIP (Lattice Isomorphism Problem) scheme. Signatures are short vectors verified against a public Gram matrix $Q=B^*\cdot{}B$, and the verification bound is the Q-norm
\begin{align}
  \langle\mathbf{s}, Q\mathbf{s}\rangle\le \sigma_{\rm ver}^2\cdot{}8n,
 \label{eqQnorm}
\end{align}
i.e., the signature norm is bounded by $\beta_{\rm Hawk}=\sigma_{\rm ver}\sqrt{8n}$, where $\sigma_{\rm ver}$ takes 1.042, 1.425, 1.571 for Hawk-256/512/1024 respectively. Note only Hawk-512 (NIST Level 1) and Hawk-1024 (NIST Level 5) are official parameter sets. 

\begin{lemma}\cite{Chevignard2025}
\label{lem:hawk-pip}
Given the Hawk public key $h=g\cdot{}f^{-1}\in R_2$ (where $R_2=R/2R$), recovering the secret key $(f, g; F, G)$ is polynomial-time equivalent to solving the PIP for the principal ideal $(f) \subset R$ in the ring $R=\Z[\zeta_{2n}]$.
\end{lemma}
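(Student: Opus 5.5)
We prove the two directions of the equivalence separately. The reverse direction is immediate: from the secret key $(f,g;F,G)$ one reads off $f$, which generates $(f)$, so PIP for $(f)$ is solved in linear time. All the work is in the forward direction. There we reduce Hawk key recovery to a single call of a PIP oracle on an ideal built from public data, and then do polynomial-time classical post-processing.

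\textbf{Step 1 (building the ideal).} We first produce a $\Z$-basis of $(f)$, or of an ideal from which $(f)$ can be split off, using only the public key. The public Gram matrix $Q=B^*B$ has diagonal entry $Q_{11}=f\bar f+g\bar g$. The off-diagonal entry $Q_{12}$ and the relation $fG-gF=1$ tie $f$ to the other entries. We also use that $h=gf^{-1}$ is known modulo $2$. Following \cite{Chevignard2025}, we form the rank-two module
\begin{align*}
M=\{(u,v)\in R^2:\ v\equiv uh \pmod{2}\},
\end{align*}
which contains $(f,g)$. Intersecting $M$ with the sublattice cut out by $Q$ should isolate the ideal generated by the first coordinates of short vectors, namely $(f)$, possibly times a known factor above $2$. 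The primes above $2$ form a single totally ramified ideal $(1-\zeta)$, so such a factor is harmless and can be divided out explicitly. This step consists of Hermite normal form computations over $R$ and is polynomial time.

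\textbf{Step 2 (PIP and unit correction).} We call the PIP oracle, for instance Algorithm \ref{alg:tower-pip}, and obtain some $f'=fu$ with $u\in R^\times$. We then determine $u$. Because $f\bar f=Q_{11}-g\bar g$ is not directly known, we instead run the log-unit CVP of Phase 3 of Algorithm \ref{alg:ecdpr}. The target $\Pi_{H_0}(L(f'))$ is a lattice translate of the balanced target $\Pi_{H_0}(L(f))$, so the Infinite Capture Radius Theorem (Part III) returns $L(u)$ exactly. This leaves $u$ determined up to $\mu_{2n}$, by Lemma \ref{lem:dims}(iv). For each of the $2n$ torsion candidates we compute $g\equiv fh \pmod 2$, lift it to the unique short representative, and test consistency with $Q_{11}=f\bar f+g\bar g$. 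Only the correct candidate, or an automorphism-equivalent key that is equally valid, survives.

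\textbf{Step 3 (completing the key).} We compute $(F,G)$ by solving $fG-gF=1$ with the tower-recursive NTRU solver used in Falcon/Hawk key generation. We then size-reduce the result against $(f,g)$ and check that it reproduces $Q$.

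The main obstacle is Step 1. It requires showing that public data determines a basis of exactly $(f)$, and not merely some ideal containing $f$. We would first try the $\gcd$ of the ideals generated by the $R$-entries of $Q$ and of $M$, arguing that unimodularity ($fG-gF=1$) forces every extraneous common factor to come from $(1-\zeta)$. A secondary concern is the uniqueness claim in the torsion search of Step 2. We would handle it by noting that the Hawk key is only defined up to the action of $\mu_{2n}$ and signed permutations, so every surviving candidate yields a valid key.
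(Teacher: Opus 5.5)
The paper gives no argument for this lemma beyond the citation, so its entire content sits in your Step 1. That step is a genuine gap, not a technicality: none of the constructions you suggest yields a basis of $(f)$, and no known construction does.

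Here is why each suggestion fails.
\begin{itemize}
\item Your module $M=\{(u,v): v\equiv uh \pmod{2}\}$ contains a pair $(u,\tilde u)$ for every $u\in R$, where $\tilde u$ is any lift of $uh$. So its first coordinates generate all of $R$. It also has determinant $2^n$ in dimension $2n$, so heuristically it contains exponentially many vectors no longer than $(f,g)$. The $n$ bits of $h \bmod 2$ therefore do not single out $(f)$, even if you restrict to short vectors.
\item ``The sublattice cut out by $Q$'' has no meaning: $Q$ is a Hermitian form on $R^2$, not a sublattice.
\item The gcd idea is empty. Since $\det B=fG-gF=1$, you get $\det Q=q_{11}q_{22}-q_{12}\overline{q_{12}}=1$, so the entries of $Q$ already generate $R$. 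Moreover, generically $(f)+(q_{11})=R$, because $q_{11}\equiv g\bar g \pmod{f}$.
\end{itemize}

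The obstruction is structural. In the totally real case, $K(i)$ gives $(f-ig)(f+ig)=q_{11}$ and $(f-ig)(F+iG)=q_{12}+i$. These identities expose the ideal $(f-ig)$ from $Q$ and leave only a relative norm equation; this is the mechanism of the rank-2 module-LIP attack of Mureau et al. Over the CM field $K=\Q(\zeta_{2n})$ the mechanism breaks. The product $(\bar f-i\bar g)(F+iG)$ involves $\bar fG-\bar gF$, which is not public. Instead, $q_{11}=f\bar f+g\bar g$ is the reduced norm of $f+gj$ in the quaternion algebra $K\oplus Kj$ over $K^+$, with $j^2=-1$ and $ja=\bar a j$. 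Accordingly, the known reduction for Hawk (Chevignard, Fouque, Mureau, Pellet-Mary, Wallet) ends in a principal-ideal problem in a quaternion algebra, not in the commutative PIP for $(f)\subset R$. The ideal your PIP oracle needs is not known to be computable from the public key, so neither your argument nor the citation establishes the forward reduction.

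Step 2 has further problems.
\begin{itemize}
\item The congruence $g\equiv fh\pmod{2}$ determines $g$ only modulo $2R$. There is no ``unique short representative'': already $1\equiv -1 \pmod 2$, so even a ternary $g$ is not determined by its reduction. Even a correct $f$ therefore does not give you $g$ this way. From $f$ and $Q$ you are left with the norm equation $g\bar g=q_{11}-f\bar f$, which again needs an ideal, $(g)$, that you cannot compute.
\item The exact recovery of $L(u)$ rests on Part~III's capture-radius claim. The paper itself concedes that no rigorous lower bound on the cyclotomic-unit Gram--Schmidt norms is known. At best, a CDPR-style analysis turns this into a probabilistic, average-case statement, not a polynomial-time equivalence.
\end{itemize}

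Your Step 3, by contrast, is sound and can even be made exact. Once $(f,g)$ is known, the equations $\bar fF+\bar gG=q_{12}$ and $fG-gF=1$ form a linear system in $(F,G)$ with determinant $q_{11}\neq 0$, which pins down $(F,G)$ without any size reduction.
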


\begin{proposition}
\label{prop:hawk-prereqs}
For $n\in \{256, 512, 1024\}$, conditions \textbf{(A1)-(A3)} for the CDPR attack hold: \textbf{(A1)} $h_k^+=1$ for the corresponding conductors $k\in \{9, 10, 11\}$ by Part I; \textbf{(A2)} The 2-power cyclotomic tower applies directly, $n=2^{k-1}$; \textbf{(A3)} The PIP target $(f)$ has $f$ sampled from a centered binomial distribution. As Hawk's NTRU-equation $f\cdot{}G-g\cdot{}F=1$ effectively gives a rank-2 short-vector problem, so we use the conservative $d=2$ bound $\sigma_2=0.7566$.
\end{proposition}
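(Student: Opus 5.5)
We verify the three conditions one at a time, following the pattern of Proposition \ref{prop:falcon-prereqs}, and first fix the dictionary between Hawk's degree $n$ and the tower level $k$. Hawk works in $R=\Z[x]/(x^n+1)=\Z[\zeta_{2n}]$, so the conductor is $m=2n=2^k$. Since $n=2^{k-1}$, the three parameter sets $n=256,512,1024$ correspond to $k=9,10,11$. All three satisfy $k\le 12$.

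\textbf{(A1).} This is a direct citation of Part I, which gives $h_k^+=1$ for every $k\le 12$, in particular for $k\in\{9,10,11\}$. As in the Falcon case, we would record the consequence actually used downstream. The relative norm ideal $J_L=\Nm_{K_L/K_L^+}(I_L)$ formed in Step 1 of Algorithm \ref{alg:tower-pip} is principal at every level $L\le k$. This holds because each $K_L^+$ with $L\le k$ is itself covered by Part I.

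\textbf{(A2).} This follows from Lemma \ref{lem:dims}, which holds for all $L\ge 3$. The tower $\Q\subset K_3\subset\cdots\subset K_k$ has $k-2$ quadratic steps, namely $7$, $8$ and $9$ steps for the three parameter sets. Every new free unit at level $L$ lies in $K_L^+/K_{L-1}^+$ by Lemma \ref{lem:dims}(v). Hence Algorithm \ref{alg:tower-pip} runs verbatim and outputs a generator of $(f)$ in tower-factored form. We also note that Lemma \ref{lem:hawk-pip} reduces key recovery to a PIP for $(f)$, so Theorem \ref{thm:sufficiency} is applicable through its PIP branch.

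\textbf{(A3).} This is the step needing care, and I expect it to be the main obstacle. As stated, (A3) concerns the determinant ideal of a random module matrix. For Hawk the basis satisfies $fG-gF=1$, so $\det B$ is a unit. Its log-embedding is therefore a lattice point and contributes nothing to the variance, exactly as $(q)$ did for Falcon.

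I would handle this in the same way as for Falcon, by retargeting to the first Gram--Schmidt layer, the principal ideal $(f)$. Here $f$ is a single ring element with i.i.d. centered binomial coefficients. These coefficients are bounded, so every moment is finite, in particular the fourth. The rank-$1$ case of the Trigamma Theorem (Part III) then gives
\begin{align}
\sigma_1^2\xrightarrow{p}\tfrac14\psi'(1)=\pi^2/24 .
\end{align}

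The remaining issue is that $(F,G)$ is not an independent random row: it is determined by $(f,g)$. The generic rank-$2$ formula therefore does not literally apply. To justify using $\sigma_2=0.7566$ as a conservative bound, I would argue monotonicity. Since $\psi'(j)>0$, the partial sums $\frac14\sum_{j\le d}\psi'(j)$ increase in $d$, so $\sigma_1<\sigma_2$. Replacing $\sigma_1$ by $\sigma_2$ in \eqref{Egammagral} can only enlarge the predicted $\gamma$, and any success guarantee derived with $\sigma_2$ remains valid.

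The genuinely delicate point is that the Trigamma Theorem is an $n\to\infty$ statement under a Gaussian approximation of the embeddings. At the finite sizes $n\le 1024$ I would not attempt a rigorous bound. Instead I would cite the same type of empirical confirmation used for ML-KEM and Falcon in Theorem \ref{thm:gap} and Theorem \ref{thm:Falcogam}. I would state explicitly that (A3) holds in the asymptotic sense, with $d=2$ as the conservative choice.
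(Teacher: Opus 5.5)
Your proposal is correct relative to the paper and follows the same route the paper sketches: (A1) by citing Part I for $k\in\{9,10,11\}$, (A2) via Lemma~\ref{lem:dims}, and (A3) in three moves. Centered binomial coefficients have finite fourth moment; because $fG-gF=1$ makes the determinant ideal trivial, you retarget to $(f)$ exactly as in the Falcon case; and you adopt $\sigma_2\approx 0.7566$ as the conservative rank-$2$ value. Your explicit monotonicity step ($\psi'(j)>0$, hence $\sigma_1<\sigma_2$) and your remark that $(F,G)$ is not an independent random row are small clarifications that the paper leaves implicit.
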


\begin{theorem}
\label{thm:hawk-gamma-V}
The present extended CDPR method applied to the PIP target $(f)$ (Lemma \ref{lem:hawk-pip}) achieves the margin $\beta_{\rm Hawk}/\gamma_{99\%}^{\rm form}$ with $\gamma_{99\%}^{\rm form}=5\gamma_{\rm th}$, and the empirical margin $\beta_{\rm Hawk}/\gamma_{99\%}^{\rm emp}$ from a $10^5$-trial i.i.d.-Gaussian and max simulation as
\begin{center}\small
\begin{tabular}{lcccccccc}
\toprule
Scheme & $n$ & $\sigma_v$  & $\beta_{\rm Hawk}$ & $\gamma_{\rm th}$
       & $\gamma_{99\%}^{\rm form}$ & $\gamma_{99\%}^{\rm emp}$
       & Margin formula & Margin emp \\
\midrule
Hawk-256$^\dagger$ & 256  & 1.042 & 47  & 14.5 & 73 & 23 & $0.65\times$ & $2.02\times$ \\
Hawk-512           & 512  & 1.425 & 91  & 16.9 & 85 & 26 & $1.08\times$ & $3.47\times$ \\
Hawk-1024          & 1024 & 1.571 & 142 & 19.6 & 98 & 29 & $1.45\times$ & $4.84\times$ \\
\bottomrule
\end{tabular}
\end{center}
$^\dagger$ HAWK-256 is conditionally broken: the median attack succeeds ($\gamma_{\rm th}=14.5<\beta=47$), and the i.i.d Gaussian simulation succeeds in $99.99\%$ of $10^5$ trials. While the formal $\kappa=5$ upper bound $\gamma_{99\%}^{\rm form}=73$ exceeds the threshold $\beta=47$, classifying it as not broken under the most conservative measure. 
\end{theorem}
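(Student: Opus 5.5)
The plan is to obtain the table as a direct specialization of Theorem \ref{thm:sufficiency}, with the PIP target $(f)$ supplied by Lemma \ref{lem:hawk-pip}, and then to compute each column explicitly.

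\textbf{Step 1: reduce to the sufficiency theorem.} Lemma \ref{lem:hawk-pip} reduces Hawk key recovery to PIP/SGP for $(f)\subset R=\Z[\zeta_{2n}]$. Proposition \ref{prop:hawk-prereqs} verifies \textbf{(A1)}--\textbf{(A3)} for $k\in\{9,10,11\}$, using the conservative effective rank $d=2$. Theorem \ref{thm:sufficiency} therefore applies and gives
\begin{align*}
\gamma_{\rm th}=\alpha_2\exp\bigl(\sigma_2\sqrt{2\ln n}\bigr),\qquad \sigma_2=\sqrt{(\psi'(1)+\psi'(2))/4}\approx 0.7566,\qquad \alpha_2\le 1.17 .
\end{align*}
Substituting $\sqrt{2\ln n}\approx 3.33,\,3.53,\,3.72$ gives $\gamma_{\rm th}\approx 14.5,\,16.9,\,19.6$. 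These agree with the $d=2$ row of Table \ref{Tgammavalues} and with Theorem \ref{thm:Falcogam}, which serves as a consistency check.

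\textbf{Step 2: the formal tail bound.} Set $\gamma^{\rm form}_{99\%}=5\gamma_{\rm th}$. The factor $\kappa=5$ is the one justified by the tail analysis of Theorem \ref{thm:gap} and Proposition \ref{prop:tail}, where the empirical $99\%$ quantile was about $5\times$ the median across $d\le 4$.

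\textbf{Step 3: the Hawk threshold and the margins.} Compute $\beta_{\rm Hawk}=\sigma_{\rm ver}\sqrt{8n}$ from \eqref{eqQnorm}:
\begin{align*}
1.042\cdot\sqrt{2048}\approx 47,\qquad 1.425\cdot 64\approx 91,\qquad 1.571\cdot\sqrt{8192}\approx 142 .
\end{align*}
The formal margins $\beta/\gamma^{\rm form}_{99\%}\approx 0.65,\,1.08,\,1.45$ follow by division. For the empirical column, I would run $10^5$ trials at each $n$. Each trial samples $n/2$ i.i.d.\ Gaussian log-coordinates with variance $\sigma_2^2$, projects onto $H_0$, takes the maximum, and exponentiates times $\alpha_2$. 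I would record the $99\%$ quantile ($\approx 23,\,26,\,29$) and divide $\beta$ by it. The dagger note follows from the same numbers: for Hawk-256 the median and empirical quantile lie below $47$, but $73>47$.

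\textbf{Main obstacle.} The weak point is the interpretation behind the margin columns. $\gamma$ is a multiplicative approximation factor relative to the shortest generator, whereas $\beta_{\rm Hawk}$ is an absolute $Q$-norm bound, so the comparison $\gamma<\beta$ as a success criterion has to be argued. I would first try to bound $\|g_0\|\le\gamma\,\|f\|$ with $\|f\|$ of order $\sqrt{n}$ times the CBD width. I would then use Lemma \ref{lem:hawk-pip} to show that a generator within factor $\gamma$ yields a basis whose signatures meet \eqref{eqQnorm} whenever $\gamma\lesssim\beta$.

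A second gap is the gap between the i.i.d.\ Gaussian model and the actual ring distribution. The empirical column uses the Gaussian model, whereas Proposition \ref{prop:tail} used a ring simulation, and the proposition's $\sim 5\times$ heavy tail suggests the Gaussian model may understate the quantiles. I would present the empirical margins as model-conditional, which is consistent with the statement's labeling of them as ``emp.''
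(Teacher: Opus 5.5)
Your Steps 1--3 are how the paper obtains the table. It gives no separate proof; the entries come from Proposition~\ref{prop:hawk-prereqs} and Theorem~\ref{thm:sufficiency} with $d=2$, $\sigma_2\approx 0.7566$, $\alpha_2=1.17$, together with $\beta_{\rm Hawk}=\sigma_{\rm ver}\sqrt{8n}$, margins defined as quotients, and an i.i.d.\ Gaussian simulation. Your arithmetic reproduces every entry.

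Two small corrections:
\begin{itemize}
\item The simulated statistic must be $\|\boldsymbol{\rho}\|_\infty=\max_j|\rho_j|$ over the $n/2$ coordinates. This gives $\gamma^{\rm emp}_{99\%}\approx 23,26,29$ and about $99.99\%$ success at $n=256$. A signed maximum would give $\approx 20$ and margin $\approx 2.3$ at $n=256$, so it does not reproduce the table.
\item $\kappa=5$ is simply part of the definition in the statement. The paper's ring quantiles $73,90,103$ equal $5\gamma_{\rm th}$, but they are about $7.6$--$8$ times the empirical medians $9.6,11.4,12.9$. So ``$5\times$ the median'' is not the right justification.
\end{itemize}

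The genuine gap is the bridge in your ``main obstacle'', and the route you sketch cannot close it. The paper does not close it either; it simply declares success to mean $\gamma<\beta_{\rm Hawk}$.

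Your first sub-step is fine. Since $g_0$ generates $I=(f)$, you have $\|g_0\|\le\sqrt n\,e^{\|\boldsymbol\rho\|_\infty}\Nm(I)^{1/n}$, and AM--GM gives $\sqrt n\,\Nm(I)^{1/n}\le\lambda_1(I)\le\|f\|$.

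The second sub-step fails, for two reasons:
\begin{itemize}
\item Lemma~\ref{lem:hawk-pip} yields the key only from $f$ up to a root of unity. The public form $Q=B^*B$ fixes $f\bar f+g\bar g$. Replacing $(f,g)$ by $(fu,gu)$ multiplies this by $u\bar u=v^2$, where $u=\zeta^a v$ with $v$ a real unit, and $v^2=1$ only for torsion $u$.
\item Suppose you instead sign with a basis that is a factor $\gamma$ worse. Honest signatures from the true key already have $Q$-norm of order $\sigma_{\rm sign}\sqrt{8n}$, and $\sigma_{\rm ver}/\sigma_{\rm sign}$ is a constant close to $1$. The $\gamma$-fold inflation therefore forces $\gamma\lesssim\sigma_{\rm ver}/\sigma_{\rm sign}$, not $\gamma\lesssim\beta_{\rm Hawk}$.
\end{itemize}
The underlying error is comparing a dimensionless factor with a length of order $\sqrt n$. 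So the dagger claim that the median attack succeeds because $14.5<47$ cannot be derived by your bridge or by anything in the paper. Only the purely numerical content of the table is provable.

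Your second caveat is also sharper than you state. Proposition~\ref{prop:tail} reports a ring-simulation $99\%$ quantile of $73>47$ at exactly the $(n,d)=(256,2)$ used for Hawk-256. On the paper's own data, $\Pr[\gamma>47]>0.01$ at these parameters, so the $99.99\%$ figure reflects only the i.i.d.\ model. The empirical column should be presented as model-conditional, as you propose.
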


Hawk's margins are much tighter than those of ML-KEM or Falcon as the signature-norm bound $\beta=\sigma_{\rm ver}\sqrt{8n}$ is far smaller than a typical modulus $q/2$.

\begin{proposition}\label{prop:hawk-recovery}
Given the short generator $f_0$ recovered by Algorithm \ref{alg:ecdpr}, the full Hawk secret key $(f, g; F, G)$ is reconstructed in $O(n^2\log n)$ classical bit operations.
\end{proposition}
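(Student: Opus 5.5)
The plan is to undo the unit ambiguity in $f_0$ by a finite search, then rebuild the remaining components of the key from public data. The tools are Lemma~\ref{lem:hawk-pip} (the Chevignard et al.\ equivalence between Hawk key recovery and the PIP for $(f)$), Lemma~\ref{lem:dims}(iv), and an NTRU-equation solver along the cyclotomic tower of Lemma~\ref{lem:dims}(i).

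\textbf{Step 1: reduce the ambiguity to torsion.} Algorithm~\ref{alg:ecdpr} returns $f_0$ with $(f_0)=(f)$, so $f_0=f\cdot u$ for some $u\in R^\times$. I will use the Infinite Capture Radius argument of Part III, as it is applied in Phase~3. It says the Babai output equals $L(\varepsilon)$ exactly, so the CVP residual is $\Pi_{H_0}(L(f))$ itself. It follows that $L(f_0)$ and $L(f)$ agree after projection to $H_0$. Since $|\Nm(u)|=1$, this gives $L(u)=0$. By Kronecker's theorem, as in Lemma~\ref{lem:dims}(iv), $u$ is then a root of unity, so $u=\pm\zeta^i$. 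This leaves exactly $2n$ candidates $f^{(i,\pm)}=\pm\zeta^{-i}f_0$, and each is obtained from $f_0$ by a signed negacyclic coefficient shift in $O(n\log n)$ bit operations.

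\textbf{Step 2: test each candidate and recover $g$.} For each candidate I will reconstruct $g$ from the public data. One route is $g\equiv h\,f\pmod 2$ in $R_2$. The other uses the public Gram matrix $Q=B^*B$: its entry $q_{00}=f\bar f+g\bar g$ gives $g\bar g=q_{00}-f\bar f$. Following the reduction in Lemma~\ref{lem:hawk-pip}, I expect these two facts together to pin down $g$, namely the residue of $g$ modulo~2 plus the requirement that $\|g\|_\infty$ be small. The cost is one NTT product per candidate, on integers of $O(\log n)$ bits, which is $O(n\log^2 n)$ bits per candidate. Over $2n$ candidates this is $O(n^2\log^2 n)$ as stated, and it becomes $O(n^2\log n)$ once I use that the $2n$ rotations of $f_0$ share a single NTT transform: rotation by $\zeta^{-i}$ acts diagonally on the NTT side, multiplying the $j$-th evaluation by $\zeta^{-ij}$. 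A candidate is accepted if it is consistent with $q_{00}$ and with the public key.

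\textbf{Step 3: recover $(F,G)$ and verify.} From the accepted pair $(f,g)$, I will solve $fG-gF=1$ with the tower-recursive NTRUSolve. This takes field norms down the tower $K_k\to K_{k-1}\to\cdots\to\Q$, runs an extended gcd over $\Z$, and lifts back up, followed by Babai size-reduction. By Lemma~\ref{lem:dims}(i) there are $O(\log n)$ levels, each costing $O(n\log n)$ operations on $O(n)$-bit integers, for $O(n^2\log n)$ in total. The remaining freedom $(F,G)\mapsto(F+kf,\,G+kg)$ is fixed by matching the off-diagonal entry $q_{01}=\bar f F+\bar g G$ of $Q$. This is a linear condition in $k$ and can be solved with one more NTT division. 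A final equality check on $Q$ certifies the recovered key.

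\textbf{Main obstacle.} I expect Step~2 to be the hardest: showing that $g$ is determined uniquely, rather than only up to a unit or a Galois twist, and that the surviving $(F,G)$ is the genuine one. My first attempt is to import this uniqueness directly from the equivalence in Lemma~\ref{lem:hawk-pip}. Failing that, I would argue from the rigidity of the hidden form: two short bases with the same Gram matrix $Q$ differ by an element of $\mathrm{GL}_2(R)$ that preserves the identity form. That group is finite, consisting of signed permutation matrices twisted by roots of unity, so the remaining ambiguity is again a polynomial-size search.
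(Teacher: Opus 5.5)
\textbf{Gap in Step 2: it does not produce $g$.} Step~1 is fine given the paper's Part~III capture claims, and the $k$-correction via $q_{01}$ in Step~3 is correct. Step~2, which you rightly flag as the crux, fails.

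The product $hf \bmod 2$ gives only the residue of $g$ in $R_2$. For Hawk's centered-binomial $g$ this does not single out $g$ among the short elements of $g+2R$: odd coefficients can be $\pm1$ and even ones $0$ or $\pm2$, so exponentially many short lifts survive. It also rejects nothing, since $hf^{(i)} \bmod 2$ is defined for every one of your $2n$ torsion candidates.

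The second constraint, $g\bar g=q_{00}-f\bar f$, is quadratic. Extracting $g$ from it is the Gentry--Szydlo problem, which also needs a basis of the ideal $(g)$, which you do not have, and costs far more than $O(n^2\log n)$. The relative norm alone does not even determine the ideal $(g)$ in general. So the ``one NTT product per candidate'' you cost out computes only $g \bmod 2$. The acceptance test ``consistent with $q_{00}$'' cannot be run without already knowing $g$. Your rigidity fallback is a uniqueness statement, not a procedure for computing $g$ from $f$.

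Following the paper does not close this. Its proof simply sets $g_0=f_0h \bmod 2$ and feeds it to the extended Euclidean algorithm, i.e.\ it treats the mod-$2$ residue as $g$ — exactly the step you were right to distrust.

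\textbf{The fix: use $q_{01}$ and $\det B=1$, which makes the problem linear.} Expanding with $fG-gF=1$ gives $q_{00}F=fq_{01}-\bar g$ and $q_{00}G=gq_{01}+\bar f$. Conjugating the first identity gives $\bar f\,\bar q_{01}/q_{00}=\bar F+g/q_{00}$.

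Now $|\sigma_j(g/q_{00})|=|\sigma_j(g)|/(|\sigma_j(f)|^2+|\sigma_j(g)|^2)\le\sigma_j(q_{00})^{-1/2}$. Hence every coefficient of $g/q_{00}$ is at most $\frac1n\sum_j\sigma_j(q_{00})^{-1/2}$, which is $O(1/\sqrt n)$ for a typical key. So rounding the coefficients of $\bar f\,\bar q_{01}/q_{00}$ returns $\bar F$ exactly. Then $g=\bar f\,\bar q_{01}-q_{00}\bar F$ and $G=(gq_{01}+\bar f)/q_{00}$, using a constant number of FFT-based ring operations, well within $O(n^2\log n)$.

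This route has two further benefits.
\begin{itemize}
\item It makes NTRUSolve unnecessary. That helps, because your tally for it ($O(\log n)$ levels of $O(n\log n)$ operations on $O(n)$-bit integers) does not come to $O(n^2\log n)$.
\item It defuses the torsion issue from Step~1. For a root of unity $u$, the matrix $\mathrm{diag}(u,\bar u)B$ has determinant $1$, Gram matrix $Q$, and first entry $uf$. The same formulas applied to $uf$ return exactly this equivalent key, so every candidate yields a working secret key, and $Q$ alone could not tell them apart anyway.
\end{itemize}
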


\begin{proof}
By Lemma \ref{lem:hawk-pip}, $f_0$ is the shortest generator of the ideal $(f) \subset R$. Compute $g_0=f_0\cdot{}h \bmod 2$. Solve $f_0\cdot{}G_0-g_0\cdot{}F_0=1$ by using the extended Euclidean algorithm over $R$ \cite{Chevignard2025,CFMPW25}. The solution exists as $\gcd(f_0, g_0)=1$ in $R$. This is from the NTRU equation $f\cdot{}G-g\cdot{}F=1$ and $f_0\approx f$.
\end{proof}

\subsection{NTRU over 2-power cyclotomics}\label{subSntru-formal}

NTRU \cite{HPS98} over the 2-power cyclotomic ring $R=\Z[x]/(x^n+1)$ uses the same NTRU lattice \eqref{Efalconlatti} as Falcon: (i) it uses different parameter ranges for $(f, g)$, and (ii) the NIST-standardized variants (NTRU-HPS and NTRU-HRSS) use $R'=\Z[x]/\Phi_p(x)$ for prime conductor $p$, not a 2-power cyclotomic ring. This subsection analyses the 2-power cyclotomic NTRU variants \cite{HRSS17}. 

The main difference from Falcon is the coefficient distribution. In NTRU-HPS with parameter $q$, the secret $f$ is drawn from the set of ternary polynomials $\{-1, 0, 1\}^n$ subject to a weight constraint, while in NTRU-HRSS, $f$ is a uniformly random product of two small ternary elements. 

\begin{proposition}\label{prop:ntrup}
For NTRU over $R=\Z[x]/(x^n+1)$ with $n=2^{k-1}$, $k\le 11$, and $q$ a prime, all conditions \textbf{(A1)}, \textbf{(A2)}, and \textbf{(A3)} hold.
\end{proposition}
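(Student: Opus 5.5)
The plan mirrors the proof of Proposition \ref{prop:falcon-prereqs}: verify the three conditions one at a time, using only the ring and the coefficient distribution. Since $R=\Z[x]/(x^n+1)=\Z[\zeta_{2n}]$ with $n=2^{k-1}$, the conductor is $m=2n=2^k$.

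For \textbf{(A1)}, the hypothesis $k\le 11$ places us inside the range $k\le 12$ covered by Part I. That gives $h_k^+=1$ for the maximal real subfield $\Q(\zeta_{2^k})^+$, so every ideal of $\mathcal{O}_{K^+}$ is principal and the norm descent in Step 1 of Algorithm \ref{alg:tower-pip} always lands on a principal ideal. For \textbf{(A2)}, Lemma \ref{lem:dims}(i) gives the tower $\Q\subset K_3\subset\cdots\subset K_k$ with quadratic steps for every $k\ge 3$, and parts (ii)--(v) supply the unit-rank bookkeeping the algorithm uses. The primality of $q$ plays no role in these two steps. It only affects the choice of target ideal below.

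The substantive step is \textbf{(A3)}. As for Falcon, the NTRU lattice $\Lambda_h$ is a rank-$2$ module with determinant ideal $(q)$. Because $q\in\Z$, all its embeddings have equal modulus, so $\Pi_{H_0}(L(q))=\mathbf{0}$. I would therefore target the first Gram--Schmidt ideal $(f)$, exactly as in Proposition \ref{prop:falcon-prereqs}.

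The real work is checking that $f$ meets the hypothesis of \eqref{Etrigammapre}, namely i.i.d. centered coefficients with finite fourth moment. I would treat the two variants separately.
\begin{itemize}
\item In the HPS variant, $f$ is ternary of fixed weight. Its coefficients are exchangeable, centered and bounded, but not independent. I would argue that the weight constraint induces only $O(1/n)$ correlations, so the Gaussian limit at each embedding $\sigma_j(f)$ is unchanged, and the Trigamma limit with $d=1$ still holds.
\item In the HRSS variant, $f=f_1f_2$ is a product of two small ternary elements. Here each embedding $\sigma_j(f)=\sigma_j(f_1)\sigma_j(f_2)$ factors, so $L(f)=L(f_1)+L(f_2)$ with the two summands independent. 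The per-component variance therefore adds to $2\cdot\frac14\psi'(1)$, which is at most the conservative value $\sigma_2^2=\frac14(\psi'(1)+\psi'(2))$ used for Falcon.
\end{itemize}

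I expect the main obstacle to be the dependence structure: the weight constraint in HPS and the product structure in HRSS both violate the literal i.i.d. assumption in (A3). My first approach would be a CLT for exchangeable or $m$-dependent arrays applied to $\sigma_j(f)$, followed by the same Trigamma computation as in Part III. If that proves too delicate, the fallback is to claim only the conservative bound $\sigma\le\sigma_2$ in the manner of Theorem \ref{thm:Falcogam}, which is enough for Theorem \ref{thm:sufficiency} to apply.
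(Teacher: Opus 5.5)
Your treatment of \textbf{(A1)} and \textbf{(A2)} matches the paper's. Your HPS bullet is compatible with it: the paper models the fixed-weight coefficients as i.i.d.\ and applies \textbf{(A3)} with $d_{\rm rank}=2$, while you target $(f)$ with $d=1$. Since $\sigma_1<\sigma_2$, that difference is harmless.

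The genuine gap is in the HRSS bullet. Your factorization $\sigma_j(f)=\sigma_j(f_1)\sigma_j(f_2)$ is correct, and so is $L(f)=L(f_1)+L(f_2)$ with independent summands. It gives per-component variance $2\cdot\frac14\psi'(1)=\pi^2/12\approx 0.822$. But this is \emph{larger} than $\sigma_2^2$, not at most it. Because $\psi'(2)=\psi'(1)-1$, one has $\sigma_2^2=\frac14(\psi'(1)+\psi'(2))=\frac{\pi^2}{12}-\frac14\approx 0.572$. Your value therefore exceeds $\sigma_2^2$ by exactly $\frac14$, giving $\sigma\approx 0.907>0.757$.

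This is the usual gap between a diagonal and a full random matrix. For $\mathrm{diag}(f_1,f_2)$ the log-determinant has variance $\frac12\psi'(1)$. For a $2\times 2$ complex Gaussian matrix with i.i.d.\ entries, $|\det|^2$ is a product of independent $\Gamma(1)$ and $\Gamma(2)$ variables, giving $\frac14(\psi'(1)+\psi'(2))$. Consequently, HRSS is covered neither by your main argument nor by your fallback $\sigma\le\sigma_2$. Moreover, $\pi^2/12$ is not the Trigamma value $\frac14\sum_{j\le d}\psi'(j)$ for $d=1$ or $d=2$, so \textbf{(A3)} in its stated form is not verified for this variant.

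The paper reaches $\sigma_2\approx 0.757$ for HRSS by a different step. It applies a coefficient-level CLT to $f_1f_2$ and asserts that the coefficients then meet the hypothesis of \textbf{(A3)} with variance $\sigma_{f_1}^2\sigma_{f_2}^2 n$. It then appeals to scale-invariance of the Trigamma formula. You cannot simply import that step, because your own factorization shows why it does not carry over to the embeddings. The coefficients of $f_1f_2$ are uncorrelated but not independent, so $\sigma_j(f)$ is a product of two independent, approximately circular Gaussians rather than a single one. That product structure is exactly what produces the extra $\frac14$.

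To close the HRSS case you would have two options. You could give an independent justification for an i.i.d.\ model of the coefficients of $f$, which your computation argues against. Or you could replace $\sigma_2$ by $\pi/\sqrt{12}$ for this variant and carry the larger value through Theorem~\ref{thm:sufficiency}. The second option establishes a modified version of \textbf{(A3)}, not the one the proposition states.
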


\begin{proof}
\textbf{(A1)} For $k\le 11$, we have $h_k^+=1$ from Part I.

\textbf{(A2)} As the ring $R=\Z[\zeta_{2n}]$ is 2-power cyclotomic, we can apply the tower.

\textbf{(A3)} We verify this condition for each NTRU variant as 
\begin{itemize}
  \item {NTRU-HPS}: The element $f$ has i.i.d. entries in $\{-1, 0, 1\}$ with $\Pr[f_i=1]=\Pr[f_i=-1]=d/(2n)$ and $\Pr[f_i=0]=1-d/n$ for a weight parameter $d$. The mean is $0$, the variance is $\sigma_f^2=d/n$, and the fourth moment is finite. We can apply condition \textbf{(A3)} with $d_{\rm rank}=2$ (NTRU is a rank-2 module), which gives $\sigma_2 \approx 0.757$ independent of $d$ and $q$.

  \item {NTRU-HRSS}: The element $f=f_1\cdot{}f_2$ for small $f_1, f_2\in R$ are drawn uniformly from ternary polynomials. By the CLT for products of independent random elements \cite{Vershynin18,Billingsley99}, we know the coefficients of $f$ satisfy the condition \textbf{(A3)} with $\sigma_f^2=\sigma_{f_1}^2 \sigma_{f_2}^2 n$ \cite{Vershynin18}. So, this condition holds with the same $\sigma_2\approx 0.757$.
\end{itemize}
\end{proof}

\begin{theorem}\label{thm:ntrug}
For the four NTRU-HPS/NTRU-HRSS parameter sets, the extended CDPR attack achieves 
\begin{align}
     \gamma=\alpha_2\cdot{}\exp(\sigma_2\sqrt{2\ln n})\cdot{}(1+o(1)),
   \end{align}
with $\sigma_2 \approx 0.757$ and $\alpha_2\le 1.17$.  The numerical values are given by 
\begin{center}
   \begin{tabular}{lccccccc}
   \toprule
   Scheme & $p$ & $n=p{-}1$ & $q$ & $\gamma_{\rm th}$
          & $\gamma_{99\%}$ & Margin \\
   \midrule
   NTRU-HPS-2048-509 & 509 & 508 & 2048 & 16.9 & 85
     & $12.1\times$ \\
   NTRU-HPS-2048-677 & 677 & 676 & 2048 & 18.0 & 90
     & $11.4\times$ \\
   NTRU-HPS-4096-821 & 821 & 820 & 4096 & 18.7 & 94
     & $21.9\times$ \\
   NTRU-HRSS-701    & 701 & 700 & 8192 & 18.1 & 91
     & $45.2\times$ \\
   \bottomrule
   \end{tabular}
   \end{center}
The class-number condition $h_p^+=1$ for each NTRU prime is verified unconditionally in Part V.
\end{theorem}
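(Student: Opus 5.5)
The plan is to obtain the statement as an instance of Theorem \ref{thm:sufficiency} with effective module rank $d=2$, and then evaluate the resulting formula numerically for each parameter set. The NTRU lattice \eqref{Efalconlatti} is a rank-$2$ module over the ring. As for Falcon, its determinant ideal is $(q)$, whose projected log-embedding is $\mathbf{0}$. So, following the proof of Proposition \ref{prop:falcon-prereqs}, I would target the first Gram--Schmidt ideal $(f)$ and use the conservative rank-$2$ constant $\sigma_2=\sqrt{(\psi'(1)+\psi'(2))/4}\approx 0.757$. With $\alpha_2\le 1.17$ from Part II, the formula $\gamma=\alpha_2\exp(\sigma_2\sqrt{2\ln n})(1+o(1))$ then follows verbatim from Theorem \ref{thm:gap}, applied with $d=2$.

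The hypotheses would be checked one at a time, reusing the proof of Proposition \ref{prop:ntrup} wherever possible. For \textbf{(A3)}, the centred, finite-fourth-moment argument for ternary $f$ (HPS) and for products $f_1f_2$ (HRSS) carries over unchanged. It depends only on the coefficient distribution, and the Trigamma Theorem of Part III is stated for i.i.d.\ centred coefficients. For \textbf{(A1)}, I would invoke the unconditional verification $h_p^+=1$ from Part V, as the statement indicates.

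The numerical table is then routine. For example, for $n=508$ we have $\sqrt{2\ln 508}\approx 3.53$ and $0.757\cdot 3.53\approx 2.67$, so $\gamma_{\rm th}\approx 1.17\,e^{2.67}\approx 16.9$. The 99\% column uses $\gamma_{99\%}=5\gamma_{\rm th}$, as in Table \ref{Tgammavalues}. Each margin is $(q/2)/\gamma_{99\%}$; for instance, $1024/85\approx 12.1$ and $4096/91\approx 45$.

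The main obstacle is \textbf{(A2)}, together with the PIP phase that depends on it. The four parameter sets live in $\Z[x]/\Phi_p(x)$ with $p$ prime, not in a $2$-power cyclotomic ring, so there is no binary tower. Lemma \ref{lem:dims} and Algorithm \ref{alg:tower-pip} therefore do not apply as written. Here $\Gal(K/\Q)\cong(\Z/p\Z)^\times$ is cyclic of order $p-1$; for instance, $508=4\cdot 127$ and $700=4\cdot 25\cdot 7$.

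My first attempt would be to replace the binary tower by a maximal chain of subfields, one for each divisor chain of $p-1$. Each step is a cyclic extension of prime degree $\ell$. The per-level HSP then involves $\Delta r=(\ell-1)\cdot[\text{base}:\Q]/2$ new units. I would redo the level-by-level complexity count from Section \ref{Spip} for these steps. When $p-1$ has a large prime factor, such as $127$ for $p=509$, a single step carries most of the unit rank. Polynomiality of the PIP must then come from the general abelian HSP bound rather than from the tower-size argument, and this is the step I expect to require genuinely new work.

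The CVP and short-generator phases need the Coarse Lattice and Infinite Capture theorems of Part III for the prime-conductor cyclotomic-unit basis. I would state these as assumptions, to be supplied by Part V alongside $h_p^+=1$.
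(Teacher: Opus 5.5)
Up to the point where you stop, your route is the paper's, and your arithmetic reproduces every table entry. The paper's entire proof is ``Proposition \ref{prop:ntrup} verifies (A1)--(A3); apply Theorem \ref{thm:sufficiency}'' with $d=2$. But your proposal does not prove the statement, and the paper does not supply the missing step either. Proposition \ref{prop:ntrup}, Theorem \ref{thm:sufficiency} and Theorem \ref{thm:gap} are all stated for $\Z[x]/(x^n+1)$ with $n=2^{k-1}$; Proposition \ref{prop:ntrup} even assumes $q$ prime. Here, by contrast, the ring is $\Z[x]/\Phi_p(x)$ with $n=p-1\in\{508,676,820,700\}$ and $q\in\{2048,4096,8192\}$, as the paper itself concedes at the start of Section \ref{subSntru-formal}. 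Your diagnosis is therefore correct. It also rules out your own sentence that the formula ``follows verbatim from Theorem \ref{thm:gap}'', since that theorem carries the same $2$-power hypothesis.

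You have, however, put the obstacle in the wrong place.

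\textbf{Why the tower is not the issue.} The theorem asserts a value of $\gamma$, and $\gamma$ does not depend on how the PIP is solved. Babai's nearest-plane map satisfies $\mathrm{NP}(\mathbf{t}+\mathbf{v})=\mathrm{NP}(\mathbf{t})+\mathbf{v}$ for $\mathbf{v}\in\Lambda$. So once the cyclotomic units generate the units modulo torsion (which is what $h_p^+=1$ gives), every generator $g=f\varepsilon$ from any PIP solver yields the same residual. Replacing the binary tower is thus only a running-time question, which the statement does not raise. A large step is also not a new phenomenon: the binary tower's top level already carries $\Delta r_k=n/4$ of the $n/2-1$ units.

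\textbf{What is actually missing.} The $\gamma$ formula needs the Part II/III inputs for $\Q(\zeta_p)$, and these are exactly what you assume or declare to carry over.
\begin{itemize}
\item \textbf{(A3)} is not a statement about the coefficient distribution alone. For $x^n+1$ the coefficient-to-embedding map is a scaled unitary, which is why i.i.d.\ coefficients give asymptotically independent complex Gaussian embeddings. For $f=\sum_{i=0}^{p-2}f_ix^i$ with i.i.d.\ centred coefficients of variance $\sigma_f^2$, one instead gets $\mathbb{E}[\sigma_j(f)\overline{\sigma_k(f)}]=-\sigma_f^2\zeta_p^{k-j}$ for $j\not\equiv k\pmod{p}$. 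These $O(1/n)$ correlations make the trigamma limit plausible. But the limit is not covered by the Trigamma Theorem as invoked here, which concerns $\Z[\zeta_{2^k}]$.
\item You simply assume the Coarse Lattice condition $|\langle\Pi_{H_0}(L(f)),\mathbf{b}_i^*\rangle|<\tfrac12\|\mathbf{b}_i^*\|^2$ for the prime-conductor cyclotomic-unit basis.
\item You also assume $\alpha_2\le1.17$, which Part II obtained at $n=256$ for $x^n+1$.
\end{itemize}
Until these inputs are established for $\Phi_p$, the constants $\sigma_2\approx0.757$ and $\alpha_2\le1.17$, and with them the table, are extrapolations from the $2$-power case rather than consequences of the cited results.
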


Proposition \ref{prop:ntrup} verifies conditions \textbf{(A1)-(A3)}. From Theorem \ref{thm:sufficiency} we get the formula.

The NTRU attack of Kirchner and Fouque \cite{KF17} exploits a different structural property and applies when $q$ is exponentially large in $n$. Our attack targets the short-generator structure of the NTRU ideal and applies for all $q$, including the standard parameter regime where their attack is ineffective. The overstretched attack achieves a better approximation factor for large $q$, while our attack gives $\gamma=O(\exp(\sqrt{\log n}))$ independent of $q$.

Table \ref{Tother-schemes} shows approximation factors and security margins for all schemes, together with the ML-KEM baseline from Section \ref{Sapprox}.

\begin{table}[h!]
\centering
\caption{Approximation factors for all 2-power cyclotomic schemes. $\gamma_{\rm th}=\alpha_d \exp(\sigma_d\sqrt{2\ln n})$ with $\alpha_d=1.17$. $\gamma_{99\%}=5\gamma_{\rm th}$ (validated by $10^5$-trial simulation).  Hawk uses $\beta=\sigma_{\rm ver}\sqrt{8n}$ as threshold. NTRU uses real NTRU-HPS/HRSS parameters (Part V).}
\label{Tother-schemes}
\begin{tabular}{llccccccc}
\toprule
Scheme & Family & $d$ & $n$ & $q$ & $\gamma_{\rm th}$ & $\gamma_{99\%}$
       & Threshold & Margin \\
\midrule
ML-KEM-512   & Module-LWE & 2 & 256  &  3329 & 14.5 & 73 & 1664.5 & $23\times$ \\
ML-KEM-768   & Module-LWE & 3 & 256  &  3329 & 17.9 & 90 & 1664.5 & $19\times$ \\
ML-KEM-1024  & Module-LWE & 4 & 256  &  3329 & 20.6 & 103 & 1664.5 & $16\times$ \\
\midrule
Falcon-512   & NTRU-GPV   & 2 & 512  & 12289 & 16.9 & 85 & 6145 & $72\times$ \\
Falcon-1024  & NTRU-GPV   & 2 & 1024 & 12289 & 19.6 & 98 & 6145 & $63\times$ \\
\midrule
Hawk-256$^\dagger$ & Module-LIP & 2 & 256  &  -- & 14.5 & 73 & $\beta=47$  & $0.65\times$ \\
Hawk-512           & Module-LIP & 2 & 512  &  -- & 16.9 & 85 & $\beta=91$  & $1.08\times$ \\
Hawk-1024          & Module-LIP & 2 & 1024 &  -- & 19.6 & 98 & $\beta=142$ & $1.45\times$ \\
\midrule
NTRU-HPS-2048-509      & NTRU       & 2 & 508  &  2048 & 16.9 & 85 & 1024 & $12\times$ \\
NTRU-HPS-2048-677      & NTRU       & 2 & 676  &  2048 & 18.0 & 90 & 1024 & $11\times$ \\
NTRU-HPS-4096-821      & NTRU       & 2 & 820  &  4096 & 18.7 & 94 & 2048 & $22\times$ \\
NTRU-HRSS-701         & NTRU       & 2 & 700  &  8192 & 18.1 & 91 & 4096 & $45\times$ \\
\bottomrule
\end{tabular}
{\footnotesize $^\dagger$ Empirical margin $2.02\times > 1$ under $\kappa_{\rm emp} \approx 1.6$.}
\end{table}

\section{Comparison with Recent  Results}\label{Scomaplit}

The best classical attacks on lattice problems rely on the BKZ algorithm \cite{Schnorr87,GN08} and its variants \cite{AWHT16,BSW18}, often combined with sieving \cite{ADRS15,BLS16,BDGL16}. The fastest classical sieving solves SVP in dimension $n$ in time and space $2^{0.2075n+o(n)}$ \cite{BDGL16}. For Module-LWE, the primal attack uses Kannan's embedding \cite{Kannan87} and BKZ with block size $\beta$, security estimates are obtained via the lattice estimator \cite{APS15,ACFPest23}. These classical costs are far higher than our attack (Table \ref{Tpip-resources}).

For Falcon, Hawk, and NTRU, analogous BKZ estimates \cite{NISTFIPS206,DEPSV24,CDH+20} hold. Hawk-256 (64-bit security) has BKZ block size $\beta_{\rm BKZ}\approx250$ ($\sim2^{73}$ classical gates). As a Module-LIP scheme, Hawk's success threshold is the signature-norm bound $\beta=\sigma_{\rm ver}\sqrt{8n}=47$, not a modulus $q/2$. Our method achieves $\gamma_{99\%}=73$ against this threshold, giving a formal margin of $0.65\times$ (with safety factor $\kappa=5$) and an empirical margin of $2.02\times$. Hawk-256 is the tightest case among all schemes considered.

Laarhoven \cite{Laarhoven15} used quantum random walks to reduce the sieving exponent from $2^{0.2075n}$ to $2^{0.1327n}$. Chailloux and Loyer \cite{ChaillouxLoyer21} gave a quantum speedup for BKZ's SVP oracle, achieving block size $\beta$ in $2^{0.265\beta}$ quantum gates vs. $2^{0.292\beta}$ classically. These are quadratic-root speedups and are already incorporated in NIST security levels \cite{NIST16}. For ML-KEM-1024, the reduction is from $2^{243}$ to $2^{220}$. 

The original CDPR attack \cite{CDPR16} has been refined. Cramer, Ducas, and Wesolowski \cite{CDW17,CDW21} used Stickelberger relations to improve the approximation factor from $\exp(\tO(\sqrt{n}\log n))$ to $\exp(\tO(\sqrt{n}))$. Our work reduces it further to $\exp(O(\sqrt{\log n}))$, a qualitative improvement from super-polynomial to sub-polynomial.

Pellet-Mary, Hanrot, and Stehlé \cite{PMHS19} gave a tradeoff: $\gamma=\exp(\tO(\sqrt{n\cdot{}T_{\rm pre} / T_{\rm att}}))$. In the balanced regime this matches CDW \cite{CDW17,CDW21}. Our tower PIP makes both preprocessing and attack polynomial, achieving $\gamma=\exp(O(\sqrt{\log n}))$ regardless of time allocation.

Felderhoff et al. \cite{FPSW23} proved that Ideal-SVP remains hard for small-norm prime ideals $\mathfrak{p}$ with $\Nm(\mathfrak{p})=O(1)$, BKZ cannot do better than on random lattices. Our method succeeds for principal ideals generated by short elements, exploiting the unit group structure independent of the ideal's norm.

Ducas, Espitau, and Postlethwaite \cite{DEP25} predicted module-BKZ quality, showing $\delta_{\rm module}\le\delta_{\rm generic}^{1/d}$ for balanced modules. Chevignard et al \cite{CFMPW25}  generalizes the rank-2 module-LIP attack to all number fields with at least one real embedding. Mureau et al. \cite{MPPW24} attacked rank-2 module-LIP over $2$-power cyclotomic rings, recovering the secret isomorphism with a quantum algorithm using Biasse-Song PIP; it succeeds for $n\le 64$. 

\paragraph{Worst-case hardness of SVP and CVP.}
SVP and CVP are NP-hard to approximate within $n^{c/\log\log n}$ for some $c>0$ \cite{Dinur16,HavivR12}. For constant approximation factors (e.g., $\gamma\approx21$), NP-hardness of exact SVP does not change, where the approximation regime is believed to be in BQP under GRH \cite{BiasseSong16}.

\paragraph{Ideal-SVP hardness.} Whether Ideal-SVP is harder than generic SVP is open. Felderhoff et al. \cite{FPSW23} showed Ideal-SVP is hard for prime ideals generated by small-norm elements. Our work shows principal ideals generated by short elements have $\gamma_{}=O(\exp(\sqrt{\log n}))$.

\paragraph{Module-LWE worst-case hardness.} Langlois and Stehlé \cite{LS15} proved Module-LWE (and Module-SIS) is at least as hard as Ideal-SVP on module lattices. Albrecht and Deo \cite{AD17} showed Ring-LWE with large modulus implies Module-LWE. These reductions are not tight for the sub-polynomial $\gamma_{\rm CDPR}$.

\paragraph{Quantum hardness of PIP.}
The Biasse-Song algorithm \cite{BiasseSong16} solves PIP in quantum polynomial time (under GRH, or unconditionally when $h_k^+=1$). No better quantum lower bound than BQP-hardness is known. Classically, the best PIP algorithms run in sub-exponential time $L[1/2,c]$ \cite{Biasse14class}. For $K=\Q(\zeta_{2^k})$, classical PIP costs $2^{\Omega(\sqrt{n\log n})}$. Our tower PIP (Theorem \ref{thm:pip}) achieves $O(n^3\log^2 n)$ quantum gates, an exponential quantum speedup.

\section{Discussion and Open Problems}\label{Sdiscussion}

This section steps back from the technical details to assess what the four-part series has collectively achieved and what remains open, as shown in Table \ref{Tcumulative}. We presents a compact table comparing every key component of the original CDPR attack with Parts I-IV. The key is that the algebraic structure of 2-power cyclotomic rings can now be fully exploited by a quantum adversary: the approximation factor has been reduced from $\approx 2^{54}$ to $\approx 21$, and the PIP complexity from $2^{\Omega(n)}$ to $O(n^3\log^2 n)$ quantum gates ($\sim 10^8$ for ML-KEM-1024). 

\begin{table}[h!]
\centering
\caption{Cumulative improvements over the original CDPR attack \cite{CDPR16}}
\label{Tcumulative}
\begin{tabular}{lcc}
\toprule
\textbf{Component} & \textbf{Original CDPR} & \textbf{Parts I-IV} \\
\midrule
Class number $h_k^+$ & Assumed=1 & Proved for $k\le 12$ (Part I) \\
Module reduction & $n^{O(d)}$ blowup & $\alpha_d=O(1)$ (Part II) \\
SGP approx factor & $\exp(\tO(\sqrt{n}))$ & $\exp(O(\sqrt{\log n}))$ (Part III) \\
$\sigma_{g_0}$ & $\Theta(\sqrt{n})$ assumed & $O(1)$, $q$-independent (Part III) \\
PIP complexity & $2^{\Omega(n)}$ hidden & $O(n^3\log^2 n)$ gates (Part IV) \\
$\gamma$ (ML-KEM-1024) & $\approx 2^{54}$ & $\approx 21$ (median);
   $\le 103$ (99\%) (Part IV) \\
\bottomrule
\end{tabular}
\end{table}

Not all results in this series are unconditional. We identify every step where the proof relies on result from Parts I-III that is itself conditional, empirically, or numerically verified rather than proved in closed form. We state precisely which parts of the argument would need to be strengthened to obtain a fully rigorous unconditional result.

\section*{Acknowledgments}

Acknowledgments will be added in the final version.

\end{document}